\newcommand{\gd}{\hat{\lambda}}
\newcommand{\lam}[1]{\lambda_{#1, r_l}}
\newcommand{\lambdas}{\lam{1}, \ldots, \lam{k}}
\newcommand{\prob}{\mathbb{P}}
\newtheorem{theorem}{Theorem}
\newcommand{\supp}[1]{\mathrm{supp}(#1)}
\title{\bfseries \normalsize Balancing Profit and Traveller Acceptance in Ride-Pooling Personalised Fares}
\author[1,2]{Micha\l~Bujak*}
\author[1]{Rafa\l~Kucharski}
\affil[1]{Faculty of Mathematics and Computer Science, Jagiellonian University}
\affil[2]{Doctoral School of Exact and Natural Sciences, Jagiellonian University}
\date{\vspace{-5ex}}
\begin{document}
\maketitle

\section*{Abstract}
In a ride-pooling system, travellers experience discomfort associated with a detour and a longer travel time, which is compensated with a sharing discount.
Most studies assume travellers receive either a flat discount or, in rare cases, a proportional to the inconvenience. 
We show the system benefits from individually tailored fares.
We argue that fares that optimise an expected profit of an operator also improve system-wide performance if they include travellers' acceptance. 

Our pricing method is set in a heterogeneous population, where travellers have varying levels of value-of-time and willingness-to-share, unknown to the operator.
A high fare discourages clients from the service, while a low fare reduces the profit margin. 
Notably, a shared ride is only realised if accepted by all co-travellers (decision is driven by the latent behavioural factors). 

Our method reveals intriguing properties of the shareability topology.
Not only identifies rides efficient for the system and supports them with reduced fares (to increase their realisation probability), but also identifies travellers unattractive for the system (e.g. due to incompatibility with other travellers) and effectively shifts them to private rides via high fares.
Unlike in previous methods, such approach naturally balances the travellers satisfaction and the profit maximisation.
With an experiment set in NYC, we show that this leads to significant improvements over the flat discount baseline: the mileage (proxy for environmental externalities) is reduced by $4.5\%$ and the operator generates more profit per mile (over $20\%$ improvement).
We argue that ride pooling systems with fares that maximise profitability are more sustainable and efficient if they include travellers' satisfaction.

\textbf{Keywords}: ride-pooling, personalised pricing, individual discounts

\section{Introduction}
Ride-pooling is a door-to-door service similar to a standard ride-hailing with a caveat that, if the requested trips have similar paths and are requested at the similar time, they are pooled together and travellers share parts of their trips. 
We can distinguish three parties in the ride-pooling system.

First, when we consider travellers' perspective, we notice sharing rides is inherently associated with certain discomfort.
The travelled path is longer (other passengers are picked up and dropped off along the way), the starting time may be shifted and we share the vehicle space with other drivers. 
The sensitivity for a prolonged travel time is expressed by a value of time.
Sharing the space of the vehicle is another subjective inconvenience, usually referred to as (un)willingness to share (in the study, we will refer to it as a penalty for sharing). 
To compensate for the longer travel time and the inconvenience, travellers are offered a monetary incentive. 
Their fare is reduced compared to a private ride by a factor denoted as a sharing discount. 

Second perspective is at the system level (congestion, emission, sustainability etc.). 
When we discuss shared rides, we focus on a reduced mileage due to increased occupancy of pooled rides. 
It yields natural benefits such as reduction of carbon footprint and traffic. 
However, while these effects are highly appreciated, they yield little incentive to introduce the ride-pooling by a profit-driven operator.

The third perspective, often neglected, is the operator. 
For a commercial service provider to launch and maintain the ride-pooling operations, they need to be profitable. 
Moreover, for an operator providing ride-hailing service, the complementary sharing mode should surpass the already established private-ride model.
The operator naturally aims to achieve the highest possible profit. 
Simultaneously, the profit is obtained only if the clients engage in the service, which requires critical mass of both supply and demand. 

We propose an individualised pricing model set in a probabilistic setting.
While we explicitly maximise the operator's perspective, we show that it directly affects the two remaining parts of the service. 
Our measure (the expected profitability) optimises revenue per mileage.
As a result, the vehicle mileage is implicitly reduced, while ramifications for travellers are more complex. 
Clients who are valuable for the service performance are recognised with significantly lower fare and their satisfaction increases while others are discouraged from pooling in favour of the private ride-hailing (higher fare).

\subsection{Background}\label{sec:background}
On-demand services are appreciated thanks to personalised experience and swift commute similar to private car experience (without the potential parking issue). 
While such rides are enjoyed by travellers, they generate substantial traffic and emission. 
One way to reduce the negative effects is via introduction of the ride-pooling. 
\cite{fagnant2014travel} analysis show that a single shared vehicle can substitute eleven private cars.
The benefits of partial adoption of ride-pooling on top of private ride-hailing are discussed by \cite{engelhardt2019quantifying} and \cite{martinez2015agent}. 
They show that not only the system benefits but also travellers. 
The environmental benefits are analysed by \cite{martinez2017assesing}, while the travellers' experience is further discussed by \cite{zhang2021pool}.

Ride-pooling is sensitive to many variables. 
First, we should highlight the demand structure. 
The impact of the spatiotemporal distribution was analysed by \cite{zwick2022ride} for the case study of two German cities. 
\cite{soza2022shareability} analysed how synthetic demand distribution correlates with the system's performance. 
\cite{shulika2024spatiotemporal} analysed correlation between the demand distribution (both spatiotemporal and density) and KPIs for the widely-used case of New York City. 
The complex relations of how the two-sided market reaches critical mass are addressed in \cite{ghasemi2024modelling}.

Ride-pooling service can only exist if it is appreciated by the travellers. 
To determine whether clients are interested, we need to understand their needs. 
The modal preference was studied recently by \cite{gervzinivc2023potential} for urban areas. 
\cite{krueger2016preferences} investigates impact of the value of time (split between in and out of the vehicle) on the choice between three alternatives: taxi, ride-pooling and public transport. 
\cite{lavieri2019modeling} analysed the extend to which people are eager to share rides (the willingness to share). 
The value of time and the value of reliability in on-demand services were surveyed and analysed by \cite{alonso2020value} and also specifically for ride-pooling \cite{alonso2021determinants}. 
\cite{chavis2017development} found main variables determining the mode choice in a general purpose flexible-route transit systems.

To optimally match travellers into shared rides is a complex problem. 
The first widely recognised solution was proposed by \cite{santi2014quantifying}. 
However, the algorithm suffered from the cap to two co-sharing travellers. 
The solution was extended to allow for more travellers in the seminal paper by \cite{alonso2017demand}. 
\cite{shah2020neural} builds on the algorithm and optimises matching with neural networks. 
\cite{bilali2020analytical} introduces shareability shadows to analyse shareability in their online approach. 
\cite{ke2021probability} proposes an offline matching strategy to measure relationships between demand and the service performance.  
\cite{kucharski2020exmas} offers a strategic utility-based algorithm, where travellers engage in a shared ride only if they perceive it as more attractive than alternatives.

In our study, we focus on pricing, a determinant of the economic performance of a ride-pooling system. 
\cite{pandit2019pricing} compares two flat-fare pricing strategies in ride-pooling platforms: fixed and dynamic, i.e. build on current supply-demand ratio.
\cite{li2022pricing} proposes a flat pricing strategy, where the discount is a subject to a successful pairing.
\cite{ke2020pricing} builds an equilibrating pricing model for upfront pricing based on a pairing rate and the supply-demand relation. 
\cite{zhang2021pool} finds supply-demand equilibrium where travellers choose between solo ride, pooling ride and transit.
While these studies conduct a profound analysis on the market level, they offer flat fares for the homogeneous population and therefore maximise the operator's profit in a simplified setting. 
\cite{karaenke2023benefits} proposes an ex-post pricing method that offers an individualised discount proportional to the traveller's discomfort.
The method, however, does not differentiate shared rides with respect to their contribution to the system and only considers the client's side.
\cite{jacob2021ride} analyses ride-pooling pricing in a queuing two-sided model and while authors acknowledge heterogeneity, they limit analysis to two types of travellers.

\subsection{Motivation \& approach}
We argue that the best promotional to popularise the ride-pooling service is by proving its commercial potential.
If the service is recognised as an attractive add-on to the private ride-hailing, it should encourage operators to introduce it on the mass-scale. 
We build a pricing model where all parties of the ride-pooling operations are satisfied.
Our objective is to maximise the per-kilometre revenue (in a probabilistic setting) by individually fitting proper discounts for shared rides.

Hitherto, for pricing shared rides researches used mostly a flat discount or a proportional to the experienced discomfort (\cite{bujak2024ride}). 
Approaches rely on the behavioural homogeneity of the population (disproved by preference studies, e.g. \cite{krueger2016preferences} \cite{lavieri2019modeling}) and neglect the profit-maximisation perspective of the operator. 
In rare cases, the heterogeneity is admitted, yet strongly simplified.
In our study, we address both limitations.
To forge a cohesive framework optimising the ride-pooling system, we recognise travellers heterogeneous by introducing a probabilistic formulation. 
We introduce an optimisation decomposition that makes the process feasible for real size problems.
We reconcile the two opposing goals of profitability maximisation (operator) and attractiveness (travellers) by finding the golden mean.
The probabilistic formulation allows to determine a discount level when clients are likely to accept a shared ride and the operator maintains a high profit. 
To quantify the operator's perspective, we introduce an expected profitability measure as a ratio of the expected revenue to the expected vehicle mileage.
We also propose a generalised formula which is suitable for a variety of market evaluations when system parameters are known. 
To calculate the expected profitability, we account for different scenarios: either travellers find a certain shared rides attractive and accept the offer or they reject and shift to the private alternative.
As a result, we directly improve the operator's economic performance while indirectly enhancing clients' and environmental positions.

\section{Methodology}\label{sec:methodology}
In our pricing framework, depicted in Figure \ref{fig:general_method}, we can distinguish three main stages. 
First, preliminary operations: we collect trip requests, learn the population distribution of behavioural traits and construct a shareability graph. 
Seeking an optimal discount level for all ride-traveller combinations is computationally unfeasible in real-size problems.
Hence, we use an existing algorithm to construct the shareability graph and achieve an efficient decomposition. 
At a second stage, we develop a pricing mechanism at a ride level. 
For each feasible ride (the shareability graph), we scrutinise the discomfort perceived by travellers. 
The discomfort level coupled with a fare determines whether a client accepts the shared ride via acceptance probability function.
We obtain how likely, when offered a certain fare, a traveller is to accept the shared ride. 
A shared ride inherently comprises several travellers who independently perceive the discomfort.
The operator needs to account for ride rejection (acceptance) decisions committed by clients. 
Considering likelihood of certain decisions, the operator seeks such discounts that maximise the objective function. 
To address goals of our study, we propose an expected profitability which combines the likelihood and the revenue from each case (travellers' decisions) and maximises the expected revenue per expected distance.
As an output of the second step, for each ride we obtain a vector of individualised fares and corresponding evaluation (of each ride in the shareability graph). 
Now, we move to the third step. 
We apply integer linear programming techniques to select an offer that maximises the average expected profitability: each traveller receives a single option of a shared ride or they are immediately shifted to a private ride.


\begin{figure}[!ht]
    \centering
    \includegraphics[width=\textwidth]{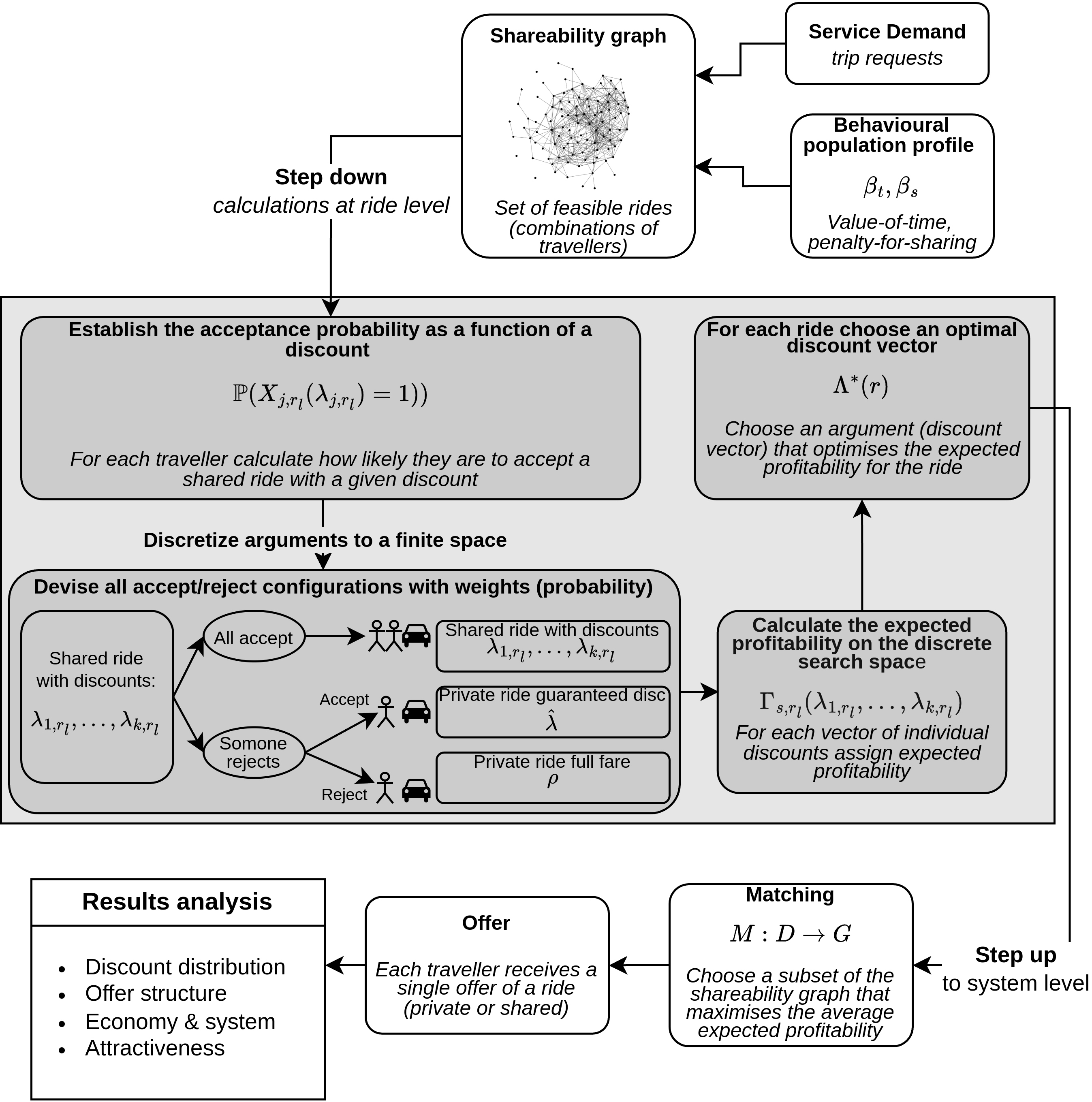}
    \caption{
    Method at glance: We use a batch of trip requests (origin, destination and time) from the population with a known distribution of behavioural parameters. We construct a dense shareability graph to identify all potential pooled rides. Next, we conduct calculations for each ride. We construct the acceptance probability as a function of a fare for each traveller. To calculate the expected profitability (at the ride level), we recognise all acceptance/reject configurations with corresponding weights (probability). We optimise the fare considering the trip characteristics and acceptance probability functions to maximise the expected profitability (objective). Finally, we select an offer, i.e. subset of pooled rides that maximises the objective.
    }
    \label{fig:general_method}
\end{figure}

\subsection{Requests collection and shareability graph construction}\label{sec:requests_shareability}
At the first stage of our algorithm, we establish an input.
We collect trip request and learn population distribution of behavioural traits from external studies.
We calculate quantiles of the behavioural distribution to construct a dense shareability graph.

To allow for a strategic evaluation, we assume the requests are collected prior to the operations (booking).
In such a setting, we have a complete and exact knowledge of the demand. 
Our first task is to create an initial (dense) shareability graph.
The shareability graph is a set of all feasible combinations of travellers, along with the order of pick-ups and drop-offs (formalised in \cite{bujak2023network}). 
To construct the shareability graph (of any degree $\geq 2$), we apply the utility-based algorithm ExMAS proposed by \cite{kucharski2020exmas}, which conducts a hierarchical search shrinking the search space significantly.

The ExMAS algorithm recognises a shared ride as feasible if it is more attractive for a client than the private counterpart. 
The solution is provided with the assumption that there is a flat discount $\lambda$ offered to all sharing a ride and no discount otherwise.
To measure attractiveness (via measuring utility/unattractiveness), we use the following utility formulas (for non-shared and shared rides, respectively):
\begin{subequations}\label{eq:utilities_exmas}
\begin{align}
& U^{ns}_i = -\rho d_i - \beta_t t_i \label{eq:exmas_ns}\\ 
& U^s_{i, r_l} = -(1 - \lambda) \rho d_i - \beta_t \beta_s (\hat{t}_i + \hat{t}^p_i), \label{eq:exmas_s}
\end{align}
\end{subequations}
where $\rho$ stands for fare ($\$ / \mathrm{km} $);
$\beta^t$ and $\beta_s$ denote the behavioural traits: the value of time and the penalty for sharing;
$t_i$ and $\hat{t}_i$ stand for travel time of non-shared and shared ride, respectively, $\hat{t}^p_i$ for a pick-up delay due to pooling (for a given candidate) and
$d_i$ represents length of the requested trip.

The density of the shareability graph created by the ExMAS algorithm is driven predominantly by three factors: the sharing discount $\lambda$ (flat), the value of time $\beta_t$ and penalty for sharing $\beta_s$ (homogeneous). 
In the pricing process, and later matching, we consider only rides present in the shareability graph.
To ensure that, at this initial stage, we do not dismiss a profitable ride, we assume a high sharing discount $\lambda$ and ride-pooling favourable behavioural parameters ($\beta_t$, $\beta_s$).
To set those parameters, we introduce a quantile level $\alpha$.
We assume we know the distribution of the behavioural characteristics in the population ($\beta_t$ for the value of time and $\beta_s$ for the penalty for sharing).
We calculate parameters $\beta_t^0$ and $\beta_s^0$ such that:
\begin{equation}\label{eq:quantile_flat}
    \prob(\beta_t > \beta_t^0) = \prob(\beta_s > \beta_s^0) = 1 - \alpha.
\end{equation}
$\beta_t^0$ and $\beta_s^0$ represent the lowest $\alpha$ quantile of the population.
There are three main implications of the $\alpha$ level. 
First, the higher the $\alpha$, the more likely we are to omit a ride which can be later perceived as attractive by pooling enthusiasts.
Second, a low $\alpha$ increases computational complexity. 
Third, a high $\alpha$ assures that travellers find an offer attractive and are less likely to reject a shared ride.

\subsection{Acceptance probability}\label{sec:acceptance_probability}
In the previous step, we constructed a shareability graph. 
From a traveller's perspective, the discomfort depends on the travel and waiting times, i.e. is specific to a ride. 
Hence, to quantify the acceptance probability, we step down to a ride level. 
In this section, we construct functions that tell us how likely a traveller is to accept a certain ride when offered a certain fare for each ride in the shareability graph.

The discomfort experienced by an individual can be divided into two categories: objective and subjective. 
For the objective measures, we distinguish fare $\rho$, travel time $\hat{t}_i$ and waiting time $\hat{t}^p_i$.
On the subjective side, we think of behavioural traits (value-of-time $\beta_t$, penalty-for-sharing $\beta_s$).
While the subjective variables do not change the trip's characteristic, they drive the perception of the experience (eq. \ref{eq:exmas_s}). 
The objective measures of a ride can be explicitly calculated and are known to the operator. 
Conversely, the individual characteristics are typically latent.
Hence, to calculate the acceptance probability, we leverage the population distribution of the behavioural traits, which we assume is known.
For example, the operator conducted a preference study or relies on historic data. 
Since the individual characteristics remain latent but the population distribution is known, we introduce a probabilistic formulation.

To formalise, we assume the population distribution of the value of time $\beta_t$ and the penalty for sharing $\beta_s$ is known (e.g. thanks to preference studies).
We consider travellers indistinguishable, i.e. their respective vectors of $\beta_t$ and $\beta_s$ are independent and identically distributed random variables (i.i.d.). 
Utility of a private ride for traveller $i$ is calculated as follows.
\begin{equation}
    U^{ns}_i = -\rho d_i - \beta_t t_i.\label{eq:utility_ns}
\end{equation}
While the formulation looks the same as in construction of the shareability graph (eq. \ref{eq:exmas_ns}), the interpretation is different, because now $\beta_t$ and consequently $U^{ns}_i$ are random variables.
In the probabilistic reformulation we acknowledge travellers' heterogeneity in perceived value-of-time $\beta_t$.

To analyse utility of a shared ride, we need to account for additional factors. 
First, the ride is shared, so we have a second random variable in the equation (penalty for sharing). 
We also expand the original formula (eq. \ref{eq:exmas_s}) to account for the number of co-travellers (different levels of $\beta_s$) as \cite{alonso2021determinants}. 
Second, the utility is a function of a shared discount (our control variable).
The utility of a shared ride $r_l$ for traveller $i$ is calculated as:
\begin{equation}
    U^s_{i, r_l}(\lam{i}) = -(1 - \lam{i}) \rho d_i - \beta_t \beta_{s, k} (\hat{t}_i + \hat{t}^p_i),\label{eq:utility_s}
\end{equation}
where $\lambda_{i, r_l}$ is a personalised discount for traveller $i$ for ride $r_l$ and $\beta_{s, k}$ is a penalty for sharing with $k-1$ co-travellers.
The level of $\lambda_{i, r_l}$ is set by our method in next steps.

In our setting, a traveller is offered a shared ride which he either accepts with the proposed discount or rejects in favour of the private alternative. 
Variable $X_{i, r_l} \in \{0, 1 \}$ indicates whether traveller $i$ chooses the shared ride $r_l$ ($X_{i, r_l}(\lam{i})=1$) if offered $\lam{i}$ discount over the non-shared ride:
\begin{equation}\label{eq:single_accept}
    \prob(X_{i, r_l} (\lam{i}) = 1) = \prob(U^s_{i, r_l}(\lam{i}) - U^{ns}_i > 0).
\end{equation}

If any of co-travellers reject the shared ride, all travellers are served with private rides. 
We assume they make simultaneous and independent decisions.
Hence, the probability that the shared ride $r_l$ comprising $k$ travellers ($1, \ldots, k$) is accepted with vector of personalised discounts $\lambda := (\lambdas)$ is
\begin{equation}\label{eq:prob_accept}
    \prob(X_{r_l}(\lambdas) = 1) = \prob(\Pi_{1 \leq i \leq k} X_{i, r_l}(\lam{i}) = 1) = \Pi_{1 \leq i \leq k} \prob(X_{i, r_l}(\lam{i}) = 1).
\end{equation}

\subsection{Discretization of the discount space}\label{sec:discretization}
At this stage, we should introduce a technique which makes our approach computationally feasible. 
While our approach has an analytical formula, the solution to the problem requires numerical approximation. 
In this section, we describe the problem and our proposed solution. 
In a nutshell, we restrict the search space (potential individual discounts) to a finite set.

We assume behavioural characteristics $\beta_t$ and $\beta_s$ to be continuous.
In our case, the vector $(\beta_t, \beta_s)$ follows multimodal, two-dimensional normal distribution.
Let $\Delta U_{i, r_l}(\lam{i}) = U^s_{i, r_l}(\lam{i}) - U^{ns}_i$. 
We calculate $\Delta U_{i, r_l}(\lam{i})$ according to formulas \ref{eq:utility_ns}, \ref{eq:utility_s} and obtain
\begin{equation}
    \prob(\Delta U(\lam{i})\geq0) = \prob(\beta_t (t_i - \beta_s(\hat{t}_i + \hat{t}^p_i)) \leq \lam{i} \rho d).
\end{equation}
We have a multiplication of multimodal, correlated random variables with a non-zero mean (in $Y := \beta_t (t_i - \beta_s(\hat{t}_i + \hat{t}^p_i)$, the $\beta_t \beta_s$ multiplication).
There is no closed formula to obtain the cumulative distribution function (cdf) of $Y$.
Therefore, we propose a numerical approximation as follows.
We discretize both $\beta_t$ and $\beta_s$. 
They now have known finite supports ($\supp{\beta_t} = \{ t_1, \ldots, t_{n_t} \}$ and $\supp{\beta_s} = \{ s_1, \ldots, s_{n_s} \} $, respectively).
Without loss of generality, assume $t_1 < \ldots < t_{n_t}$ and $s_1 < \ldots < s_{n_s}$.
Cdf of $\beta_t$ is constant on intervals $[t_i, t_{i+1})$ (analogically for $\beta_s$).
Therefore, its sufficient to calculate cdfs' values at the product of $\supp{\beta_t}$ and $\supp{\beta_s}$.
From the values, we calculate $\lam{i}$ levels.
Numerate them as $\lambda_1 < \ldots < \lambda_{n_t \times n_s}$.
Note that
\begin{equation}
    \forall_{\lambda \in [\lambda_i, \lambda_{i+1})} \prob( \Delta U(\lambda)\geq0) = \prob(\Delta U(\lambda_i)\geq0).
\end{equation}
On the intervals, the probability of accepting the ride is constant however, the revenue is the highest at the beginning of the intervals (confer eq. \ref{eq:utility_ns}, \ref{eq:utility_s}). 
Hence, it is sufficient to consider $\lambda_1, \ldots, \lambda_{n_t \times n_s}$ in the optimisation process.
We have control over precision of the approximation by manipulating $n_t$ and $n_s$ (the higher the values, the more precise results).

\subsection{Decision-dependent service realisation}\label{sec:pricing_scheme}
Hitherto, we constructed the shareability graph, defined the acceptance probability functions and discretized the discounts space. 
To properly build our probabilistic measure (expected profitability), we need to account for travellers' decisions. 

In the proposed framework, each client in the system receives a guaranteed discount $\gd$.
The guaranteed discount is an incentive towards shifting from a private ride-hailing to the ride-pooling system. 
Travellers, if compliant with the offered ride, have a reduced fare even if they are not matched with another co-traveller.
If the operator recognises a client as compatible with others, he is offered a personalised sharing discount ($\lam{i} \geq \gd$) to join a shared ride $r_l$.
Otherwise (if not compatible), he travels privately at the reduced fare by the $\gd$. 
Clients, who are offered a shared ride, simultaneously and independently decide to accept or reject.
Notably, we assume that if any of co-travellers opts against the shared ride, the proposed ride is not realised and all travel privately. 
Client who rejects is stripped of the discounted fare.
Ones who accept, travel privately at a reduced fare (maintain the guaranteed discount) as in Figure \ref{fig:general_method}.

\subsection{Objective function and fares optimisation}\label{sec:objective_measure}
Our study resolves around the pricing mechanism. 
In previous sections, we described tools required to optimise the sharing discount. 
In this section, we introduce the expected profitability (calculated at a ride level).
At a ride level, we optimise individual fares to maximise the expected profitability.
Once we completed the optimisation process, each feasible ride has assigned a vector of personalised discount alongside resulting expected profitability (required for matching).

Following the decision-dependent scheme presented in Section \ref{sec:pricing_scheme}, we need to account for accept/reject decisions made by travellers (depicted in Figure \ref{fig:general_method} in the darkened box, lower left). 
Each scenario (combination of accept/reject decisions) is weighted by its probability (eq. \ref{eq:prob_accept}). 
High personalised discounts increase the probability that travellers will opt for the pooled options. 
However, it simultaneously reduces the revenue for the operator.
Hence, we search for a golden mean, where shared rides are attractive for travellers and deliver maximum profit to the operator.
We should note here that personalised sharing discount $\lam{i}$ cannot be lower than the guaranteed discount.

We assume travellers always accept private ride option ($X_i^{ns} = 1$). 
If a client $i$ is not compatible with others, he is offered a non-shared ride with the guaranteed discount. 
Hence, the revenue is
\begin{equation}\label{eq:nd_revenue}
    R_i(\gd) = \rho (1 - \gd) d_i,
\end{equation}
where $\rho$ denotes fare, $\gd$ guaranteed discount and $d_i$ trip distance.
Under our assumption that $X_i^{ns} = 1$ (private ride is always realised), the expected revenue $\gamma_{ns, i}$ is equal to the revenue associated with the single case (traveller always accepts a private ride), i.e.
\begin{equation}
    \gamma_{ns, i}(\gd) = R_i(\gd) * \prob(X_i^{ns} = 1) = R_{ns, i}(\gd).
\end{equation}

Expected revenue computation becomes complex for a shared ride offer, as we must account for different scenarios (Fig. \ref{fig:general_method}). 
We have a single case where everyone accepts the offer and multiple configurations combining travellers who reject (no discount) and those who accept (guaranteed discount).
The expected revenue $\gamma_{s, r_l}$ is a function of a vector of personalised discounts $\lambdas$.
We calculate it on a ride $r_l$ (comprising $k$ travellers) level as follows.

\resizebox{0.97\textwidth}{!}{%
\begin{minipage}{1.2\textwidth}
\label{eq:expected_revenue}
\begin{subequations}
    \begin{align}
    & \gamma_{s, r_l}(\lambdas) = \nonumber \\
    & \prob(\Pi_{1 \leq i \leq k} X_{i, r_l}(\lam{i}) = 1) \left(\sum_{1 \leq i \leq k} R_i(\lam{i}) \right) + \label{eq:e_revenue_s} \\
    & \sum_{0 \leq s < k} \sum_{\substack{i_1, \ldots, i_k = s, \\ i_j \in \{0, 1\}}} \prob(X_{1, r_l}(\lam{1}) = i_1, \ldots, X_{k, r_l}(\lam{i}) = i_k) \left(\sum_{1 \leq j \leq k} i_j * R_j (\gd) + |1 - i_j| * R_j(0)\right) \label{eq:e_revenue_ns},
\end{align}
\end{subequations}
\end{minipage}
}

where $R_i(\lam{i})$ is a revenue from client $i$ effectively participating in the shared ride $r_l$ with personalised discount of $\lam{i}$ (eq. \ref{eq:nd_revenue}).
The equation \ref{eq:expected_revenue} was separated into two main parts corresponding to two main scenarios. 
\ref{eq:e_revenue_s} expresses the profit when all travellers accept the shared ride. 
\ref{eq:e_revenue_ns} is a sum over different combinations of travellers' choices where at least one rejected.
In the second part, we sum over multiple (sub-)scenarios. 
It starts with scenario probability followed by the revenue from travellers who accepted the shared ride ($R_j(\gd)$) and those who rejected ($R_j(0)$).

One can observe that the highest revenue is obtained when $\prob(\Pi_{1 \leq i \leq k} X_{i, r_l}(\lam{i}) = 0)$, because $R_i$ is a decreasing function ($R_i(\lam{i}) \leq R(\hat{\lambda})$).
$\prob(\Pi_{1 \leq i \leq k} X_{i, r_l}(\lam{i}) = 0)$ represents situation where the proposed discounts are very low so that each shared ride is rejected by all travellers.

Despite the expected revenue being the highest for private rides, it does not account for costs.
To assess the true perspective of the operator, one must consider costs associated both with fleet size and fleet operations. 
To offer a fair parameter-less perspective, we compute the profitability\footnote{In Section \ref{sec:generalised_formula} we provide a generalisation.}.
The profitability is a measure of income per vehicle distance.
To account for the probabilistic setting, we introduce the probabilistic counterpart, the expected profitability.
We formulate the expected profitability as for non-shared and shared rides as follows.
\begin{subequations}
\label{eq:profitability_both}
\begin{align}
    \Gamma_{ns, i}(\gd) = \frac{R_{ns, i}(\gd)}{d_i} = \rho (1 - \gd), \label{eq:profitability_ns} \\
    \Gamma_{s, r_l}(\lambdas) = \frac{\gamma_{s, r_l}(\lambdas)}{\psi_{s, r_l}(\lambdas)},
\end{align}
\end{subequations}
where $\psi(\lambdas)$ is the expected distance.
We can calculate the expected distance $\psi(\lambdas)$ similarly to the expected revenue in eq. \ref{eq:expected_revenue}.
Let $d_{s, r_l}$ denote distance travelled by the shared ride $r_l$, then
\begin{equation}\label{eq:e_dist_s}
    \psi(\lambdas) = \prob(\Pi_{1 \leq i \leq k} X_i = 1) d_{s, r_l} + (1 - \prob(\Pi_{1 \leq i \leq k} X_i = 1)) (\sum_{1 \leq i \leq k} d_i).
\end{equation}

The first term is a probability that the shared ride is realised multiplied by its distance.
The second term expresses sum over individual distances when the ride is not realised.

At a ride level, thanks to the discretization process (Sec. \ref{sec:discretization}), we can conduct the exhaustive search. 
Each ride coupled with the optimal fares vector, is mapped to its evaluation (the expected profitability). 
Formally, pricing is a function that assigns each ride a vector of individual discounts.
Let $T(r_l)$ denote travellers who comprise ride $r_l$.
Pricing $\Lambda$ is defined on the shareability graph $G$ as
\begin{equation}\label{eq:pricing_def}
    \Lambda: G \ni r_l \xrightarrow{} (\lam{1}, \ldots, \lam{|T(r_l)|}) \in \mathbb{R}_{\geq 0}^{|T(r_l)|}.
\end{equation}
To find optimal fares, for each ride $r_l \in G$, we find $\lambdas$ such that they maximise the expected profitability, i.e.
\begin{equation}\label{eq:pricing_optim}
    \lambdas = \underset{(\tau_1, \ldots, \tau_{|T(r_l)|}) \in \mathbb{R}_{\geq 0}^{|T(r_l)|}}{\mathrm{argmax}} \Gamma_{s, r_l}(\tau_1, \ldots, \tau_{|T(r_l)|}).
\end{equation}

\subsection{Offer}\label{sec:offer}
In previous steps, we evaluated each ride in the shareability graph and set discounts maximising the expected profitability.
Now, we move a level up. 
Our task is to find an optimal subset of the shareability graph such that each traveller receives a single proposition (offer) for their ride and our objective is maximised. 
We refer to the process as matching and to the solution as offer. 

Each ride has its evaluation, i.e. the expected profitability resulting from the optimised discounts. 
In our setting, the objective function $O$ is the average expected profitability multiplied by ride degree, i.e. $O(r_l, (\lambdas)) = \Gamma_{s, r_l}(\lambdas) * |T(r)|$ (for non-shared ride $\Gamma_{ns, i}$). 
The multiplier assures that we maximise the average expected profitability (otherwise two single rides with score of $1$ are better than the shared alternative with score $1.9$).
We note that objective function $O$ reaches its maximum for the same argument as $\Gamma$.
We can solve the matching problem with the integer linear programming approach (detailed formulation is provided in Section \ref{sec:loc_glob_optimum}).
For the technical implementation, we use the Python library PuLP \citet{mitchell2011pulp}.
This step concludes our analysis. 
The operator constructs an offer and all travellers receive a single shared or non-shared proposition.
Our formulation of the offer maximises the average expected profitability of the system.

\subsection{Optimisation process: local vs global approach}\label{sec:loc_glob_optimum}
Our optimisation process started at a ride level and moves up to claim the system optimum. 
We prove that the pricing problem can be decomposed and fares optimised at a ride level are also optimal for the system.

In matching, from the set of all feasible graphs, the operator chooses an optimal subset with respect to a objective function $O$ such that: each traveller is offered exactly one ride (\ref{eq:match_1}); if a shared ride is offered, it is assigned to all its co-travellers (\ref{eq:match_2}); the objective function $O$ is maximised (\ref{eq:match_3}). 

Let $D$ denote the set of all travellers. 
We follow notation from Section \ref{sec:objective_measure}.
Matching is a function $M: D \xrightarrow{} G$ that satisfies 
\begin{subequations}
\label{eq:matching}
\begin{align}
    \sum_{r \in M(D)} |T(r)| = |D|, \label{eq:match_1} \\
    \forall_{r \in M(D)} t \in T(r) \implies \forall_{t_c \in T(r)} M(t_c) = r. \label{eq:match_2}
\end{align}
\end{subequations}
Intuitively, condition \ref{eq:match_1} requires that all travellers receive an offer and condition \ref{eq:match_2} that the offer is concise, i.e. if a shared ride is offered, it is to all its participants.

Let $O_r: (G, \mathbb{R}^{|T(r)|}) \ni (r, (\lambda_1, \ldots, \lambda_{|T(r)|})) \xrightarrow{} \mathbb{R}$ be an objective function for $r$ (essentially any function which has maximum).
We define $O$ as $O(r, (\lambda_1, \ldots, \lambda_{|T(r)|})) = O_r(r, (\lambda_1, \ldots, \lambda_{|T(r)|}))$.
In our setting, the objective function $O$ is defined as in Section \ref{sec:offer}.

For a given pricing $\Lambda: G \ni r \xrightarrow{} (\lam{1}, \ldots, \lam{|T(r)|})$, each ride $r \in G$ has its evaluation with the objective function, i.e. $O(r, (\lambda_1, \ldots, \lambda_{|T(r)|}))$.
The optimal matching $M^*$ is such matching that if $M_0$ is a matching,
\begin{equation}\label{eq:match_3}
    \sum_{r \in M^*(D)} O(r, \Lambda(r)) \geq \sum_{r \in M_0(D)} O(r, \Lambda(r)).
\end{equation}
We say that the pricing $\Lambda^*$ is locally optimal, if for any other pricing $\Lambda^0$,
\begin{equation*}\label{eq:pricing_optimal}
    \forall_{r \in G} \Lambda^* (r) \geq \Lambda^0(r).
\end{equation*}

Once the optimal discounts are determined, we obtain a unique evaluation (pricing) for each ride $r \in G$.
A pricing $\Lambda^*$ is globally optimal if
\begin{equation}
    \forall_{\Lambda \mathrm{ - pricing}} \forall_{M^0 \mathrm{ - matching}} \exists_{M^* \mathrm{ - matching}} \sum_{r \in M^*(D)} O(r, \Lambda^*(r)) \geq \sum_{r \in M^0(D)} O(r, \Lambda^0(r))
\end{equation}

\begin{theorem}
    Locally optimal pricing is globally optimal.
\end{theorem}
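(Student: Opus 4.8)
The plan is to exploit the fact that the system objective is \emph{additively separable} over the rides selected by a matching, together with the observation that the discounts assigned to one ride enter the objective only through that ride's own term $O(r,\Lambda(r))$ and never through any other ride's term. This decoupling is what makes ride-level (local) optimisation sufficient for the system (global) optimum: re-pricing a ride can only change its own contribution, and the feasibility constraints (\ref{eq:match_1})--(\ref{eq:match_2}) do not reference the pricing, so the collection of admissible matchings is identical for every pricing.

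First I would restate local optimality in its operative form. Although it is written as $\forall_{r\in G}\ \Lambda^*(r)\ge\Lambda^0(r)$, the quantity compared at each ride is the ride-level evaluation, i.e. local optimality means
\begin{equation*}
    \forall_{r\in G}\ \forall_{\Lambda^0\text{-pricing}}\quad O(r,\Lambda^*(r)) \ge O(r,\Lambda^0(r)),
\end{equation*}
which is exactly the ride-level argmax of (\ref{eq:pricing_optim}). Next, I would fix an arbitrary competing pricing $\Lambda^0$ and an arbitrary matching $M^0$, and sum the termwise inequality over the rides $r\in M^0(D)$ actually used by $M^0$:
\begin{equation*}
    \sum_{r\in M^0(D)} O(r,\Lambda^0(r)) \le \sum_{r\in M^0(D)} O(r,\Lambda^*(r)).
\end{equation*}
This uses only that summation over a fixed index set preserves a termwise inequality, which is where the additive structure of $O$ is invoked.

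Then I would introduce $M^*$ as an \emph{optimal} matching with respect to the fixed pricing $\Lambda^*$; it exists because the shareability graph is finite, so the maximum in (\ref{eq:match_3}) is attained. By its optimality, $M^*$ dominates $M^0$ when both are evaluated under $\Lambda^*$:
\begin{equation*}
    \sum_{r\in M^0(D)} O(r,\Lambda^*(r)) \le \sum_{r\in M^*(D)} O(r,\Lambda^*(r)).
\end{equation*}
Chaining the two displays gives $\sum_{r\in M^0(D)} O(r,\Lambda^0(r)) \le \sum_{r\in M^*(D)} O(r,\Lambda^*(r))$, which is precisely the global-optimality condition, with this $M^*$ serving as the required witness. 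Since $\Lambda^0$ and $M^0$ were arbitrary, $\Lambda^*$ is globally optimal. (Note that the weaker witness $M^*=M^0$ already suffices by the termwise step alone; taking $M^*$ optimal under $\Lambda^*$ merely states the conclusion in its strongest form.)

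The main obstacle is not the chain of inequalities, which is routine, but justifying the decoupling underlying the very first step: one must argue that changing the pricing of a single ride leaves every other ride's objective unchanged and, crucially, leaves the set of feasible matchings unchanged, so that the same $M^0$ and the same optimal $M^*$ remain admissible under both $\Lambda^0$ and $\Lambda^*$. A secondary point to treat carefully is the mismatch between the vector-comparison notation used to \emph{define} local optimality and the scalar objective comparison actually needed; I would reconcile these by noting that $\Lambda^*(r)$ is defined in (\ref{eq:pricing_optim}) as the ride-level maximiser of $\Gamma_{s,r_l}$, hence of $O(r,\cdot)$, so the intended reading is the termwise objective domination used above.
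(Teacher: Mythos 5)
Your proof is correct and is essentially the paper's argument run forwards rather than by contradiction: both reduce to the termwise domination $O(r,\Lambda^*(r)) \ge O(r,\Lambda^0(r))$ supplied by local optimality, summed over the rides of the competing matching $M^0$, with $M^0$ itself serving as the witness $M^*$. Your explicit reconciliation of the vector-comparison notation in the definition of local optimality with the objective-level comparison the argument actually needs is the same implicit reading the paper's proof relies on when it treats $O(r,\Lambda^*(r)) < O(r,\Lambda^0(r))$ as contradicting local optimality.
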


\begin{proof}
    Let $\Lambda^*$ be a locally optimal pricing and $\Lambda^0$ a globally optimal. 
    By contradiction to the thesis, assume 
    \begin{equation*}
        \exists_{M^0 \mathrm{ - matching}} \forall_{M^* \mathrm{ - matching}} \sum_{r \in M^*(D)} O(r, \Lambda^*(r)) < \sum_{r \in M^0(D)} O(r, \Lambda^0(r)).
    \end{equation*}
    In particular,
    \begin{equation*}
        \sum_{r \in M^0(D)} O(r, \Lambda^*(r)) < \sum_{r \in M^0(D)} O(r, \Lambda^0(r)).
    \end{equation*}
    It implies that $\exists_{r \in M^0(D)} O(r, \Lambda^*(r)) < O(r, \Lambda^0(r))$ which contradicts local optimal of $\Lambda^*$. 
\end{proof}

We conclude that optimisation at a ride level leads to a system optimum. 
Our pricing maximises the expected profitability at a ride level (local optimum). 
Hence, the proposed method claims the maximum system performance.

\subsection{Illustrative example}
The proposed mechanism, based on the expected profitability, incorporates a variety of cases and sub-measures.
Here, we show evaluation of a single shared ride under two pricing methods.
We start with computations for a conventional flat-discount pricing and we follow by our personalised strategy.

Travellers $A$ and $B$ requested a pooled ride. 
They are well aligned and viable for the system, the operator offers a guaranteed discount of $5\%$. 
Their requested trips are for $3.6$ and $3.2$km while the shared ride takes $4.8$km. 
If both were offered a flat $20\%$ sharing discount, they would accept the ride with $70\%$ and $95\%$ probability.
The revenue resulting from the successful ride (if accepted by both customers) is $(3.6+3.2)*1.5*(1 - 0.2) = 8.16$ at a $0.7*0.95 = 0.665$ probability.
We need to account for reject decisions which are committed with $30$ and $5\%$ probability, respectively.
A scenario in which traveller $A$ rejects and $B$ accepts delivers a revenue of $3.6*1.5 + 3.2*1.5*(1-0.05) = 9.96$ and its probability is $0.3 * 0.95 = 0.285$. 
Analogically, a scenario in which $A$ accepts and $B$ rejects delivers a revenue of $3.6*1.5*(1-0.05) + 3.2*1.5 = 9.93$ at a $0.7*0.05 = 0.035$ probability. 
Finally, when both reject ($0.3*0.05=0.015$ chance) is associated with the revenue of $3.6*1.5 + 3.2*1.5 = 10.2$. 
Therefore, the expected revenue is $8.16*0.665 + 9.96*0.285 + 9.93*0.035 + 10.2*0.015 = 8.76555$. 
The expected profitability is a ratio of the expected revenue to the expected mileage (same scenarios) and equals $\frac{8.76555}{4.8*0.665 + (3.6+3.2)*(1-0.665)} = \frac{8.76555}{5.47} \approx 1.6024$ in this case.

In our personalised scheme, the traveller $A$, who experiences a greater inconvenience, receives a higher discount.
Travellers are offered $21.5\%$ and $13.8\%$, respectively. 
Their acceptance probability product raises to $0.72$ (traveller $A$ accepts with $80\%$ and $B$ with $90\%$ probability).
The calculated expected revenue is $8.406$ and the expected mileage equals $5.39$ resulting in the expected profitability improved to $1.642$.
The ride was purposefully chosen as a non-extreme example from our simulation. 
It was selected in the matching in both flat $20\%$ and individualised pricing schemes. 
Notably, the personalised pricing improves the expected profitability via increasing the matching probability for well-aligned trips.

\subsection{Summary}
We can summarise our approach as follows\footnote{Our method is publicly available and reproducible at available at \url{https://github.com/michallbujak/RidePoolingFrontiers}.}. 
We collect trip requests and obtain the distribution of behavioural characteristics in the population. 
With the information gathered, we apply the ExMAS algorithm to construct a dense shareability graph based on behavioural quantiles. 
For each traveller-ride configuration, we construct acceptance probability function of the individual discounts. 
To make calculations feasible, we discretize the search space. 
To find the optimal fares, we use a product of the acceptance probability with associated revenues (expected profitability). 
As a result, for each ride in the shareability graph, we obtain a vector of individual discounts along with the expected profitability of the ride. 
Next, we apply an existing (standard) solution to the linear integer programming problem to construct an offer.

\subsection{Alternative formulation of objective function}\label{sec:generalised_formula}
An operator can tailor their objective measure to target a specific market conditions. 
Now, to generalise, we present a formulation that incorporates four components: profitability, revenue, mileage and vehicle-ready cost. 
We introduce the generalised objective measure (for non-shared and shared rides, respectively) as follows.
\begin{subequations}\label{eq:alternative_profit}
    \begin{align}
    &\Gamma_{ns, i}(\hat{\lambda}) = \alpha_{ns, 0} \frac{\gamma_{ns,i}(\gd)}{d_i} + \alpha_{ns, 1} \gamma_{ns,i}(\gd) - \alpha_{ns, 2} d_i - C_{ns}  \\
    &\Gamma_{s, r_l}(\lambdas) = \alpha_{s, 0} \frac{\gamma_{s, r_l}(\lambdas)}{\psi_{s, r_l}(\lambdas)} + \alpha_{s, 1} \gamma_{s, r_l}(\lambdas) \\ 
    & - \alpha_{s, 2} \psi_{s, r_l}(\lambdas) - C_s,
\end{align}
\end{subequations}
where $\gamma_{ns, i}$ ($\gamma_{s, r_l}$) express the expected revenue, $d_i$ ($\psi_{s, r_l}(\lambdas)$) denotes the expected mileage ($d_i$ is deterministic for a non-shared ride) and $C_{ns}$ ($C_s$) account for a flat cost for a vehicle ready for service.
The relative significance of the components is controlled with the profitability sensitivity $\alpha_{ns, 0}$ ($\alpha_{s, 0}$), the revenue sensitivity $\alpha_{ns, 1}$ ($\alpha_{s, 1}$), the cost sensitivity $\alpha_{ns, 2}$ ($\alpha_{s, 2}$) and the value of the flat cost $C_{ns}$ ($C_s$).
We note that driver's commission and fleet expenses can be directly incorporated into $\alpha_{ns, 1}$, $\alpha_{ns, 2}$ and $C_{ns}$. 
The objective formulation in eq. \ref{eq:alternative_profit} can substitute the expected profitability (eq. \ref{eq:pricing_optim}) in the optimisation and matching process.

While the general formulation accounts for the market specifics, it is parameter-sensitive.
To provide a parameter-free analysis, we focus on the expected profitability ($\alpha_{ns, 1}$, $\alpha_{ns, 2}$, $C_{ns}$ and the sharing sensitivity equivalents set to $0$).

\section{Numerical study}\label{sec:numerical_study}
To present our method, we adopt a realistic setting. 
We set our experiment in NYC and use historic trip requests combined with preference studies reported in literature.
In the probabilistic setting, we benchmark our individualised pricing against flat discount strategies ($15\%$ and $20\%$). 
We look at four aspects. 
First, in the individual pricing, we investigate how discounts vary across different ride options.
Then, we scrutinise offers with respect to the private and shared rides distribution.
Third, we compare the economic performance. 
Finally, we analyse a traveller's perspective of the service.

\subsection{Experiment design}
We conduct our experiment in Manhattan, New York City, based on publicly available data published by the NYC Taxi and Limousine Commission \citep{tlcnycrecords}. 
To pick a representative sample, we rely on our previous results \citep{shulika2024spatiotemporal} and choose a medium density (300 requests/hour) 30-min batch.

For behavioural preferences of travellers, we rely on results from \cite{alonso2021determinants}. 
While the study was conducted elsewhere (Netherlands) and the utility formulation was slightly different, the data is accommodated to our model. 
Whenever a new, more adequate data is available, one can fine-tune the parameters in the experiment.
According to the preference study, the population is divided into four classes: LC1, LC2, LC3 and LC4. 
Each class exhibits different behavioural traits: the value of time and the penalty for sharing. 
We assume that properties within each class are normally distributed. 
In our recent study \citep{bujak2024ride}, we accommodated the data to our model, hence here we resort to presenting the resulting classes' properties in Table \ref{tab:vot_pfs_values}.
\begin{table}[!ht]
    \centering
    \resizebox{0.7\textwidth}{!}{
    \begin{tabular}{cc|c||c|c|c|c}
          & &  & LC1 & LC2 & LC3 & LC4 \\ \hline         
        \multirow{2}{*}{Value of time} & \multirow{2}{*}{$\beta_t$} & Mean & 16.98 & 14.02 & 26.25 & 7.78 \\ 
        & & Std.dev. & 0.318 & 0.201 & 5.777 & 1 \\ \hline
        \multirow{2}{*}{Penalty for sharing} & \multirow{2}{*}{$\beta_s$} & Mean  & 1.22 & 1.135 & 1.049 & 1.18 \\
        & & Std.dev. &  0.082 & 0.071 & 0.06 & 0.076 \\ \hline
        Share & $p_C$ & & 29\% & 28\% & 24\% & 19\% \\ \hline
    \end{tabular}}
    \caption{\small Behavioural parameters used in the analysis. Each client belongs with a given probability $p_C$ to a class LC1, LC2, LC3 or LC4. Each class is characterised by a two-dimensional normal distribution (with covariance 0) of value of time and penalty for sharing with parameters indicated in the table.}
    \label{tab:vot_pfs_values}
\end{table}

As for other arguments, we set the guaranteed discount $\hat{\lambda}$ to $5\%$ (as in \cite{uberpool}).
To construct a dense yet not overly optimistic shareability graph (according to equations \ref{eq:utilities_exmas} and \ref{eq:quantile_flat}), we set $\alpha=0.2$ and $\lambda = 0.4$. 

For a reasonable discount baseline, we calculate an average discount for the offer resulting from our algorithm for shared rides.
We obtain an average of $15\%$.
To broaden the analysis, we introduce a second flat strategy with $20\%$ discount.

\subsection{Distribution of personalised discounts} \label{sec:distribution_discounts}
Our pricing algorithm calculates individualised discounts at a ride level. 
In this section, we look at how diverse the personalised discounts are and how do they differ from a point-distribution of a flat discount. 
We distinguish two kinds of statistics. 
First, we consider all feasible rides in the shareability graph.
Second, we limit the analysis to the offer only.

\begin{figure}[!ht]
    \centering
    \includegraphics[width=0.8\linewidth]{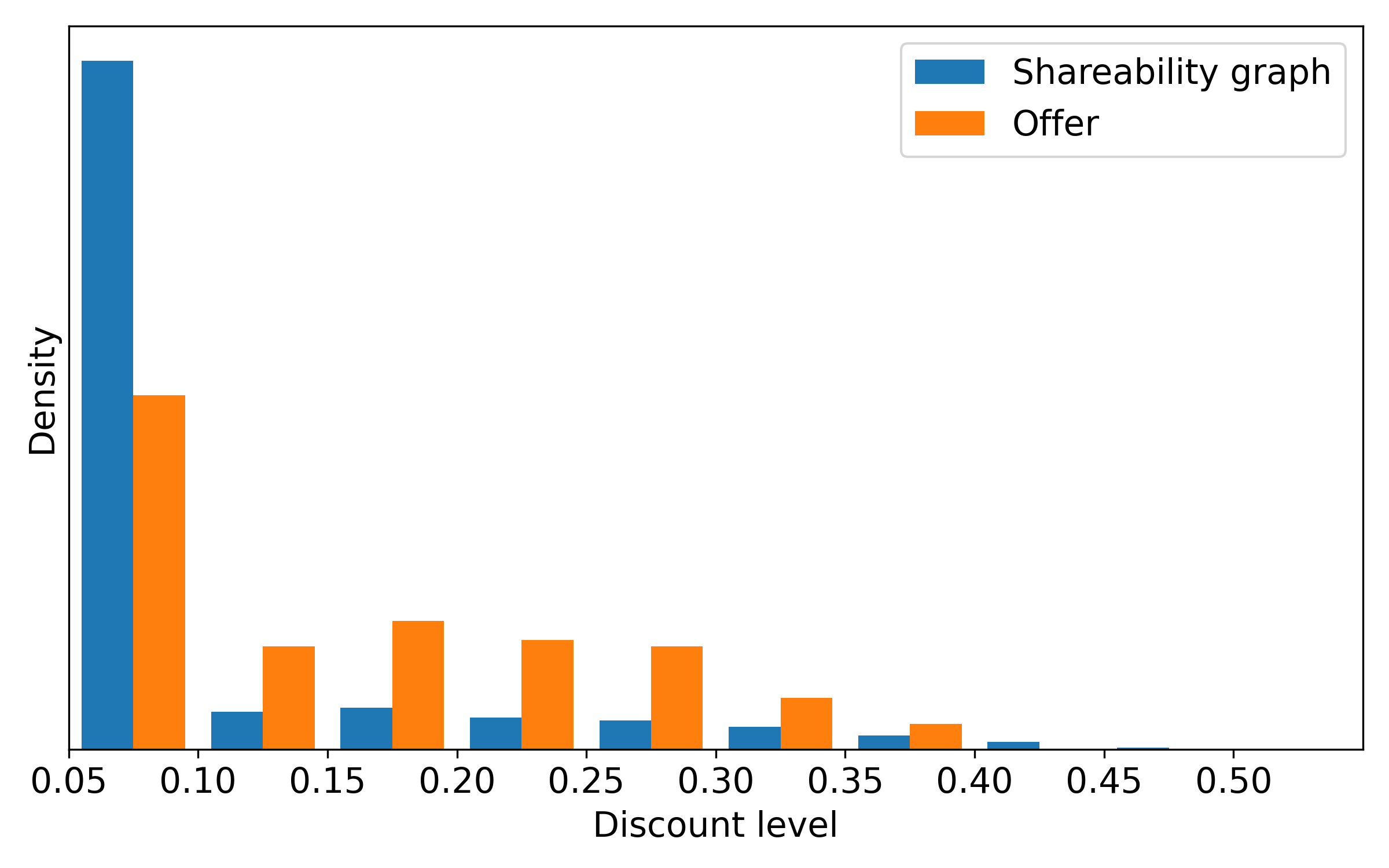}
    \caption{Density distribution of discounts resulting from the personalised pricing algorithm. We separately aggregate for all rides in the shareability graph (blue) and those selected for the offer (orange). Each bin represents a $5\%$ interval. While most rides in the shareability graph receive only the guaranteed minimum ($5\%$), those selected for the offer are associated with higher discounts. }
    \label{fig:discount_distribution}
\end{figure}
Majority of rides in the shareability graph reach maximum expected profitability when offered minimal discounts, as depicted in Figure \ref{fig:discount_distribution}.
Travellers are likely to reject shared rides and hence, the expected revenue is high. 
Despite the fact that higher discounts reduce the mileage (by increasing the acceptance probability), they rarely compensate for the revenue loss.
In the offer, rides with higher discounts are selected.
Efficient combinations where sharing yields significant mileage savings are worth a small sacrifice in the revenue.

\subsection{Offer structure}\label{sec:results_offer}
At the matching stage, in each pricing mechanism, we select the objective-maximising combinations from the shareability graph. 
Our objective is the expected profitability and for each ride depends on the vector of discounts. 
Here, we analyse characteristics of offers based on personalised and flat discount strategies.  

\begin{figure}[!ht]
    \centering
    \includegraphics[width=0.8\textwidth]{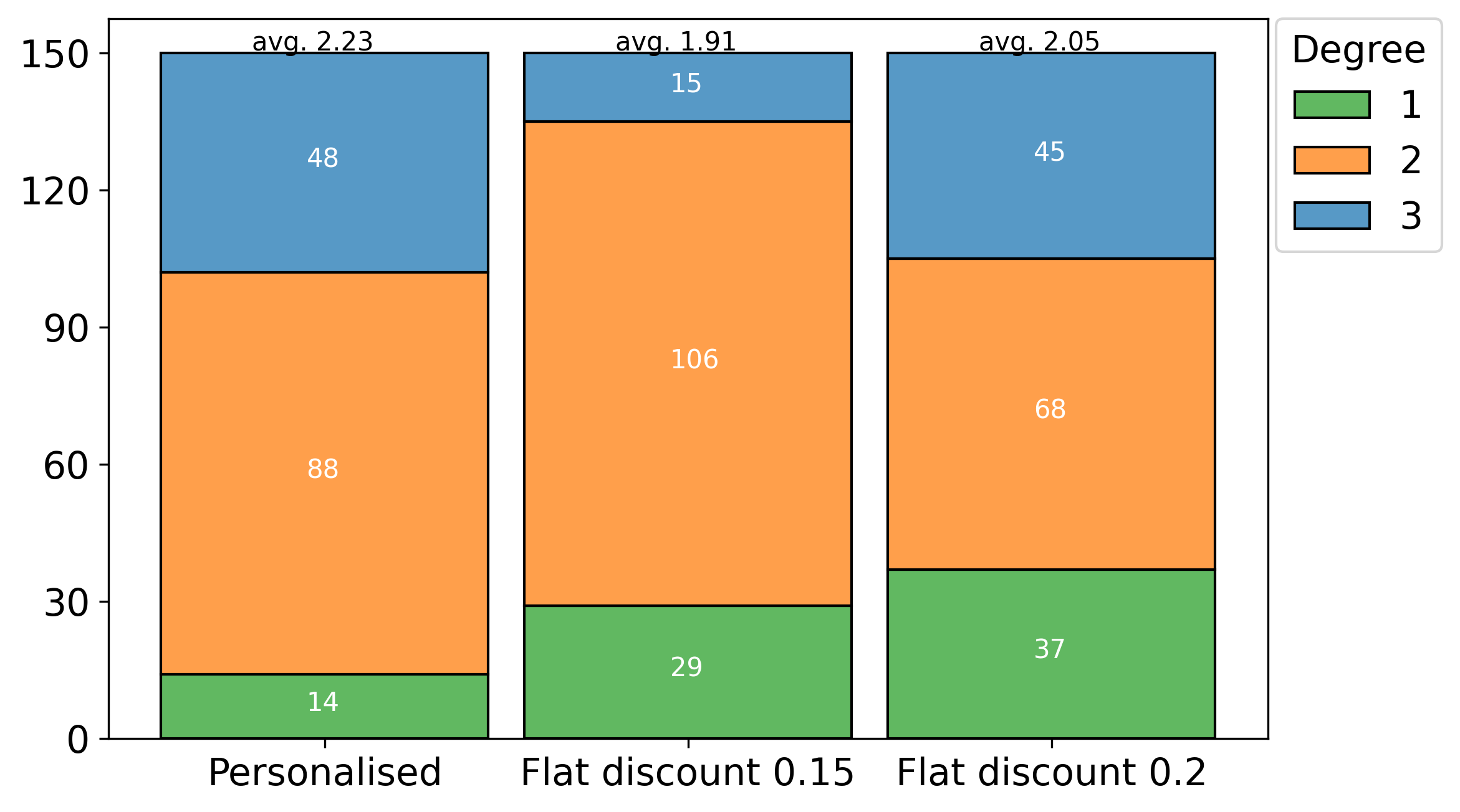}
    \caption{Degree distribution in offers constructed with 3 pricing strategies: personalised, flat sharing discount of $0.15$ and of $0.2$. The personalised approach favours more complex rides: only $14$ of $150$ travellers are offered a non-shared ride.}
    \label{fig:degrees}
\end{figure}

In Figure \ref{fig:degrees}, we can see degree distributions in the three pricing scenarios.
The number of shared rides is significantly increased with the personalised pricing.
This demonstrates that the decomposition works and the solution is globally effective.
The count of clients offered a private ride is only $14$ (out of $150$) compared to $28$ and $33$ under the flat discount strategies. 
The average degree of a ride is the highest for the personalised strategy ($2.23$) followed by $20\%$ discount ($2.05$) and $15\%$ ($1.91$).
None of the three strategies selected a ride of degree $4$ to the offer.

\subsection{Economy and system}\label{sec:economy}
When analysing performance of a ride-pooling system, two vital aspects are the economic efficiency and impact on the system. 
We measure the economic performance via the expected profitability.
To gain insights into system impact, we look at the expected mileage. 

We start by investigating the individual expected profitability for each shared ride in the shareability graph. 
We compare the three pricing strategies in Figure \ref{fig:profitability_unbalanced}. 
The personalised strategy is, by its definition, always (at least weakly) better. For each shared ride, the expected profitability is improved compared to flat baselines and above the private ride score ($1.425$ after a $5\%$ guaranteed discount). 
Rides of degree two often fall below the $1.425$ threshold under the flat pricing mechanisms. 
For more complex rides (degree $3$ and $4$), the acceptance probability is low, hence the expected profitability is driven primarily by undiscounted private rides (after a reject decision) and oscillates slightly above $1.425$.

\begin{figure}[!ht]
    \centering
    \begin{subfigure}{0.35\textwidth}
        \includegraphics[width=\textwidth, height=0.35\textheight]{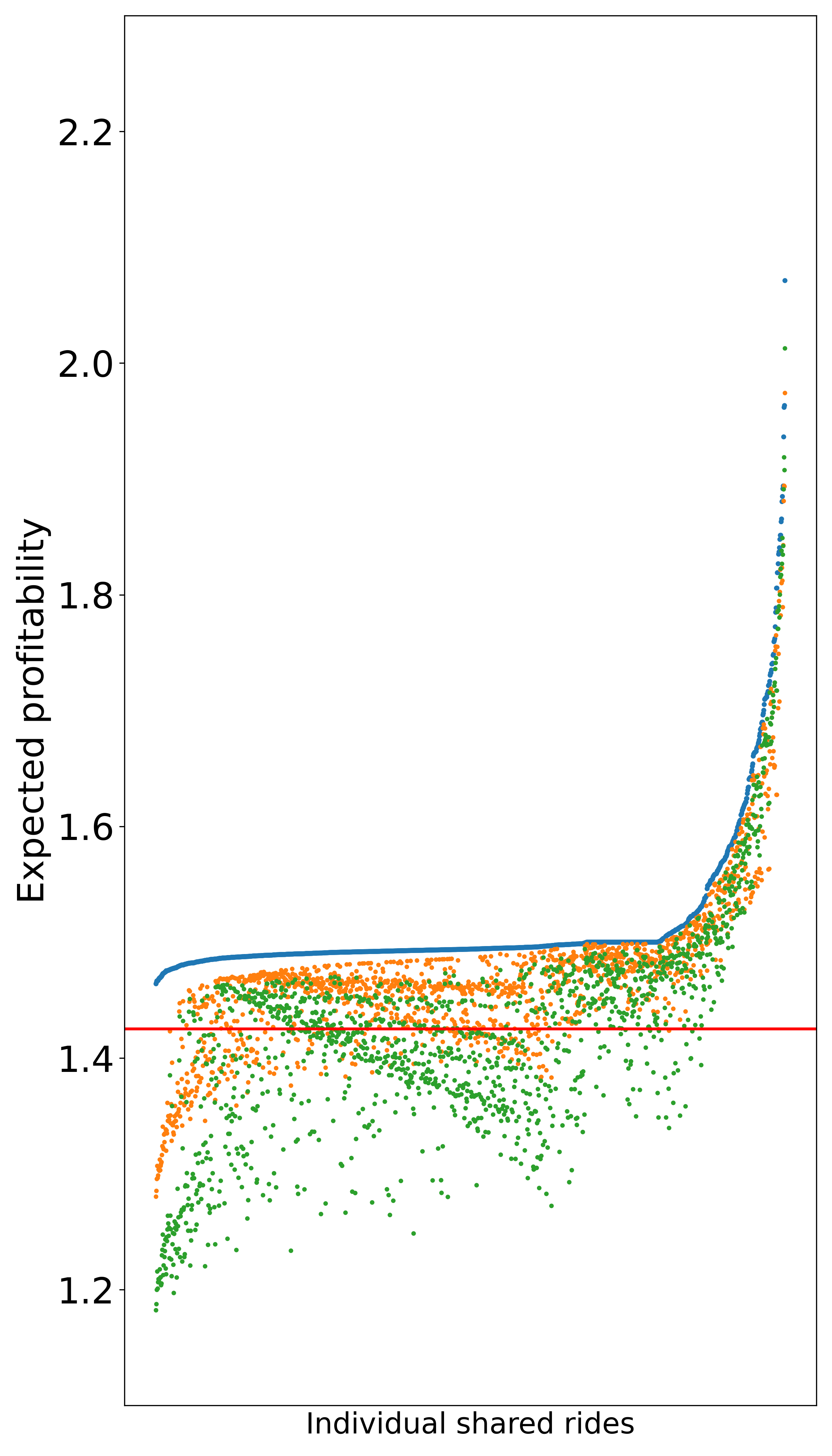}
        \caption{Degree 2}
        \label{fig:profitability_2}
    \end{subfigure}
    \begin{subfigure}{0.3\textwidth}
        \includegraphics[width=\textwidth, height=0.35\textheight]{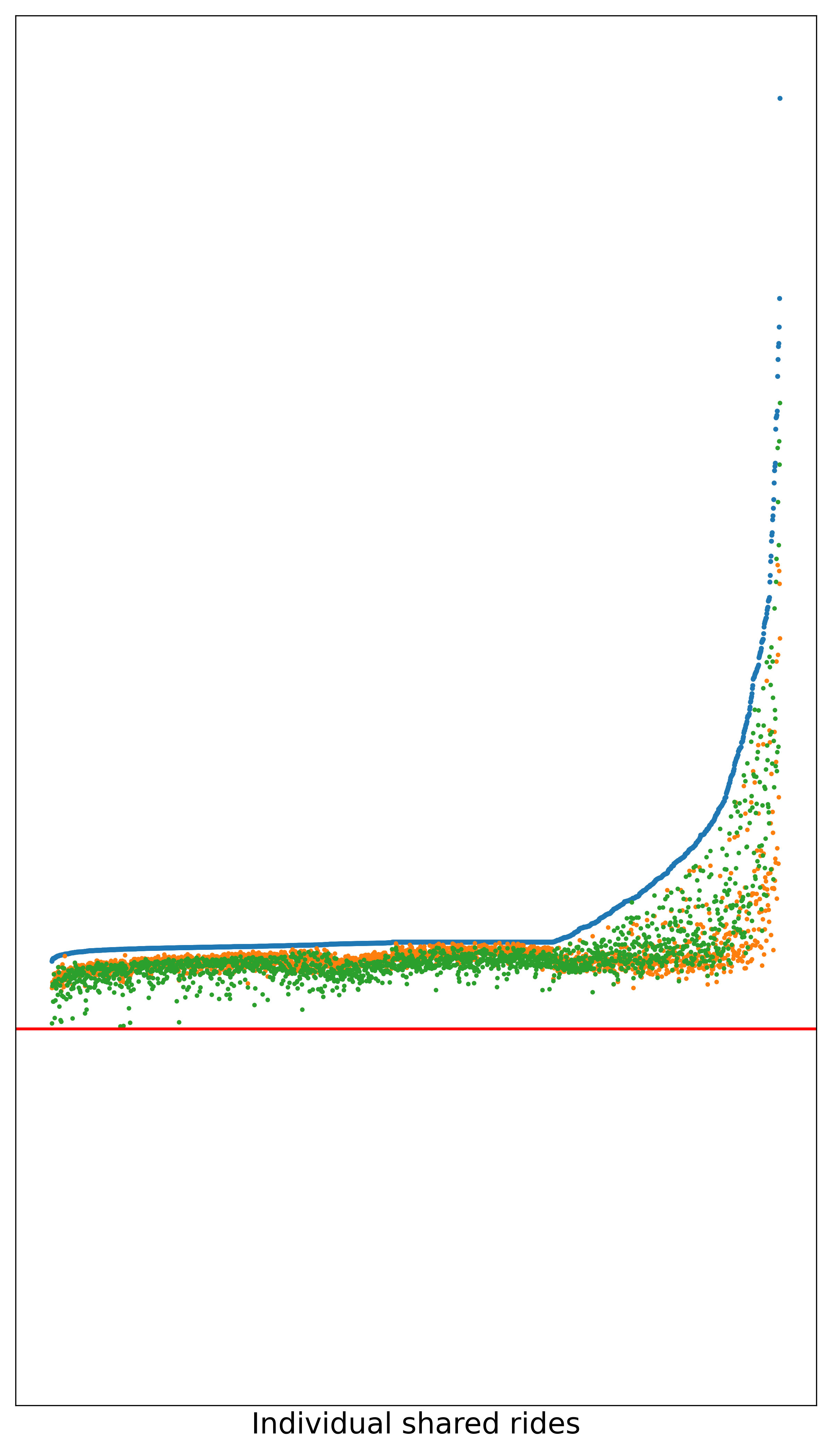}
        \caption{Degree 3}
        \label{fig:profitability_3}
    \end{subfigure}
    \begin{subfigure}{0.3\textwidth}
        \includegraphics[width=\textwidth, height=0.35\textheight]{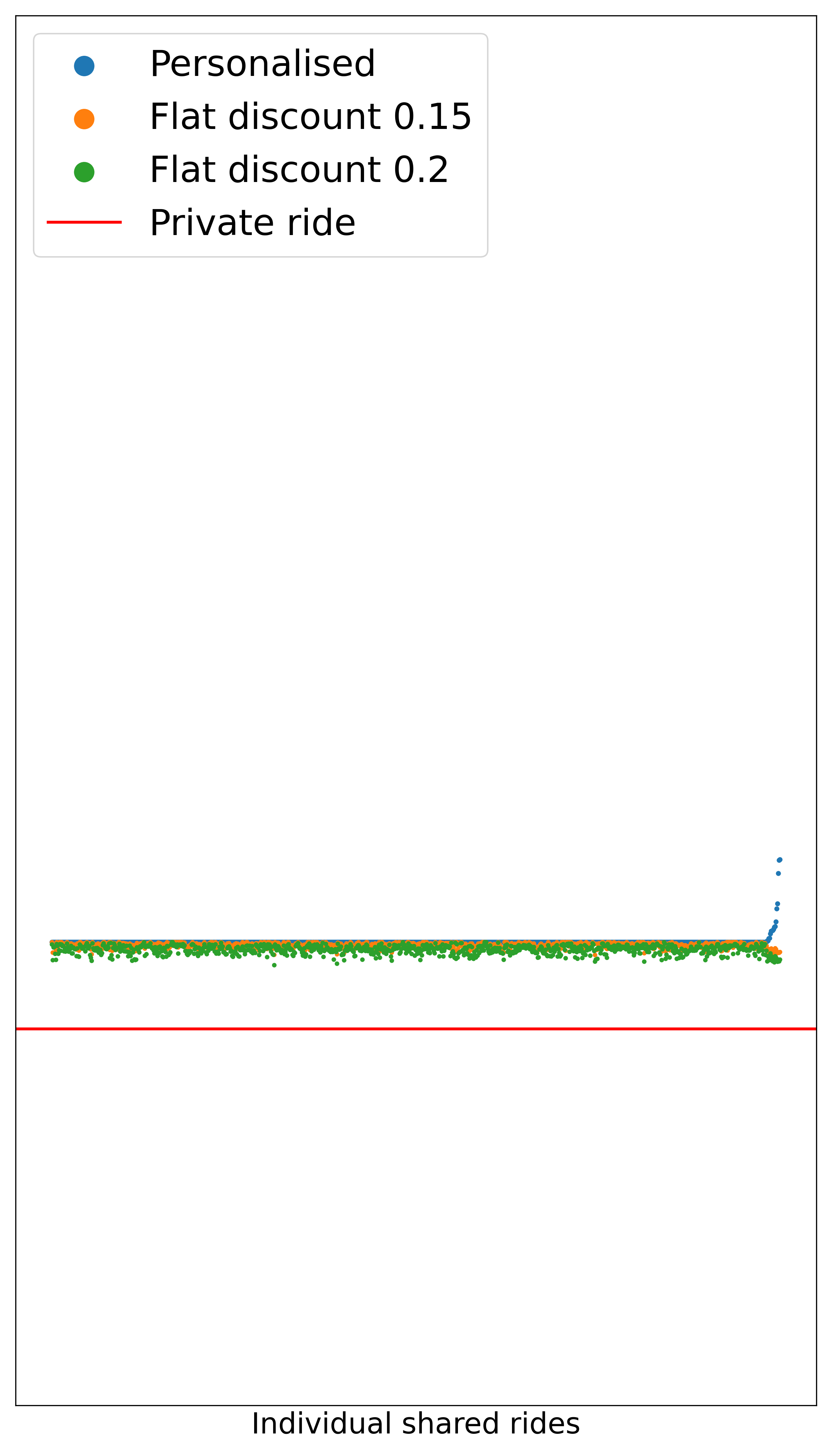}
        \caption{Degree 4}
        \label{fig:profitability_4}
    \end{subfigure}
    \caption{The expected profitability under pricing strategies: personalised and two flat discounts. At each vertical line, there are three points: blue, orange and green. They represent the expected profitability for the same shared ride under the three strategies, respectively. The red dotted line denotes $1.425$ threshold: the expected profitability of a discounted ($5\%$ guaranteed discount) private ride. 
    The personalised pricing leverages the local optimum, hence, for each feasible ride, the expected profitability is improved over the baselines. }
    \label{fig:profitability_unbalanced}
\end{figure}

We argue the personalised pricing strategy recognises both efficient and inefficient combinations. 
To investigate the ability, we analyse how relative mileage reduction correlates with expected profitability in Figure \ref{fig:profitability_distanced_saved}. 
The relative mileage reduction describes a percentage decrease in total distance when moving from private rides to a single shared rides.
Combinations that either increase mileage or yield small improvement do not exceed the $1.5$ threshold (profitability of an undiscounted private ride). 
We notice the first economic benefits associated with at least $10\%$ reduction. What follows resembles an exponential trend. 
Each step upwards the mileage reduction provides a substantial improvement in the expected profitability. 
We should note that around half of rides in the shareability graph offer at least $20\%$ mileage reduction.
\begin{figure}
    \centering
    \includegraphics[width=0.8\linewidth]{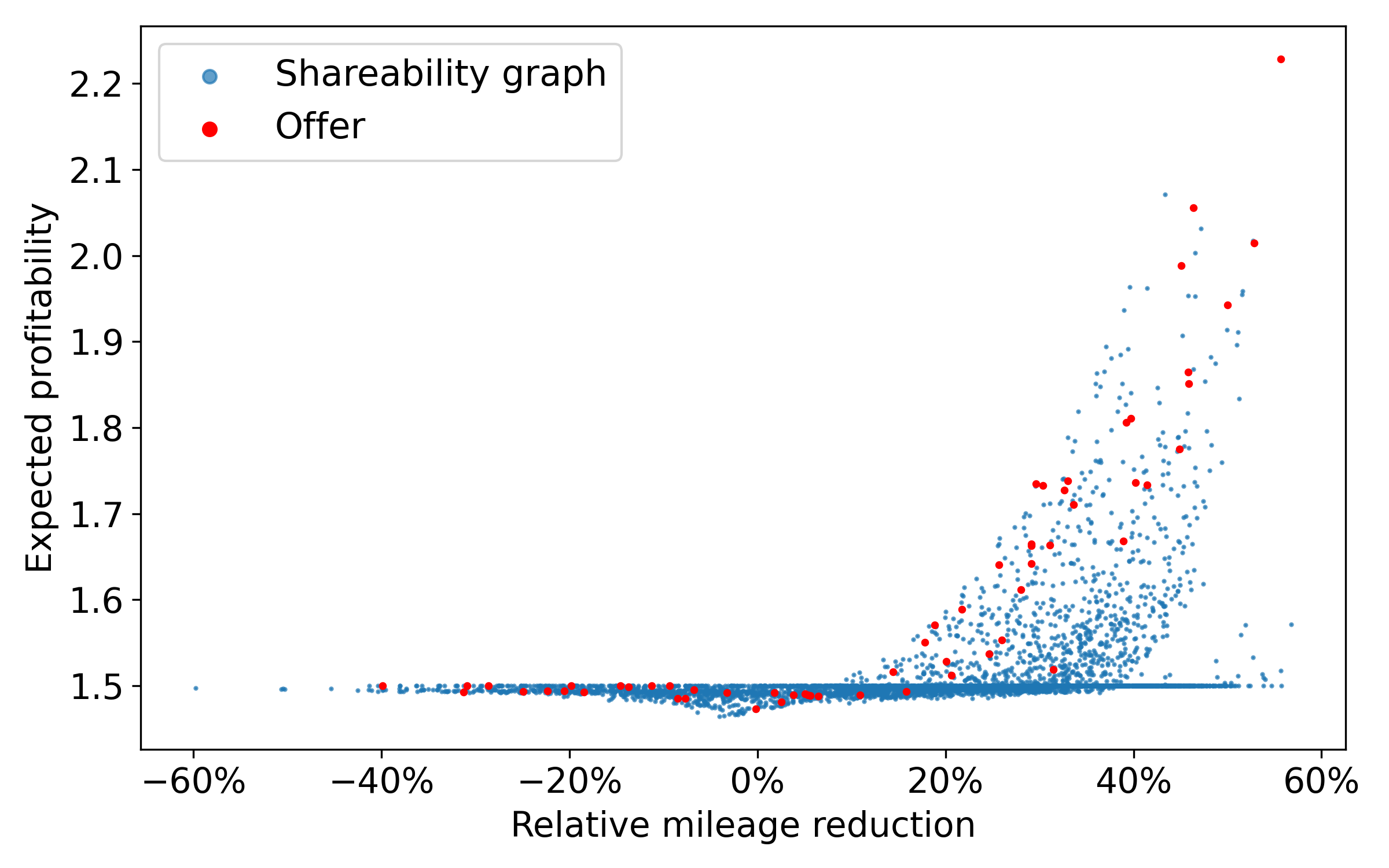}
    \caption{Distance reduction if a shared ride is realised (x-axis) against the expected profitability (y-axis). 
    Dots represent the individual ride's characteristics: in blue all rides in the shareability graph and in red those selected for the offer.
    To contribute to the system performance, rides need to be well aligned and offer mileage reduction.
    First efficient combinations (over $1.5$ profitability) appear around $10\%$ distance saved. 
    Upward the threshold, the trend resembles an exponential growth where each small increment in the mileage reduction results in a significant boost in the expected profitability.
    Around half of rides in the shareability graph provide at least $20\%$ distance reduction.
    }
    \label{fig:profitability_distanced_saved}
\end{figure}

We present aggregated economic and system results in Table \ref{tab:profit_distance}. 
In the first column, we present the average profitability of a ride in the system. In the second, we focus on the system's measure, the expected mileage. 
We additionally introduce a private ride-hailing baseline to investigate a general pooling potential.
Our measurements point to the personalised strategy as the best both in the economic and the system angle.

\begin{table}[!ht]
    \centering
    \begin{tabular}{lrr}
        \toprule
        {} &  Expected Profitability $\Gamma$ &  Expected Distance $\psi$ \\
        {} & (average) & (total) \\
        \midrule
        Personalised       &           3.29 &  332.20 \\
        Flat discount 0.15 &           2.70 &  358.11 \\
        Flat discount 0.2  &           2.73 &  348.06 \\
        Private only       &           1.50 &  415.39 \\
        \bottomrule
    \end{tabular}
    \caption{Comparison of three ride-pooling strategies and a private ride-hailing. The personalised strategy is superior both in economical and environmental perspective.}
    \label{tab:profit_distance}
\end{table}

\subsection{Service attractiveness}\label{sec:travellers}
We move to a next pillar of the ride-pooling service - customer satisfaction. 
The attractiveness perceived by travellers is a core factor that determines whether a shared ride is realised (everyone finds the ride attractive) or not (someone is dissatisfied with the ride).
In this section, we look at customer satisfaction under personalised and flat discount strategies. 
We investigate how do they correlate with the service performance measures.

In the probabilistic setting, we cannot use any deterministic measures. 
Thus, to quantify travellers' perspective, we use an acceptance probability.
The acceptance probability leverages population distribution to denote how likely a person is to find a certain shared ride attractive (given an appropriate fare according to a pricing).
We look separately at two stages. 
First, we analyse the attractiveness of potential shared rides (shareability graph) and second, in the offer (actually received propositions).
Results are presented in Figure \ref{fig:probs_accept}. 
We notice that most rides in the shareability graph are not attractive regardless of the pricing strategy. 
Operator, when conducting matching, selects two primary types of rides: only for pooling enthusiasts (low acceptance probability, low mileage reduction, high revenue) and efficient (high acceptance probability, high mileage reduction, lower revenue).
\begin{figure}[!ht]
    \centering    
     \begin{subfigure}[b]{0.50\textwidth}
         \centering
         \includegraphics[height=0.18\textheight, width=\textwidth]{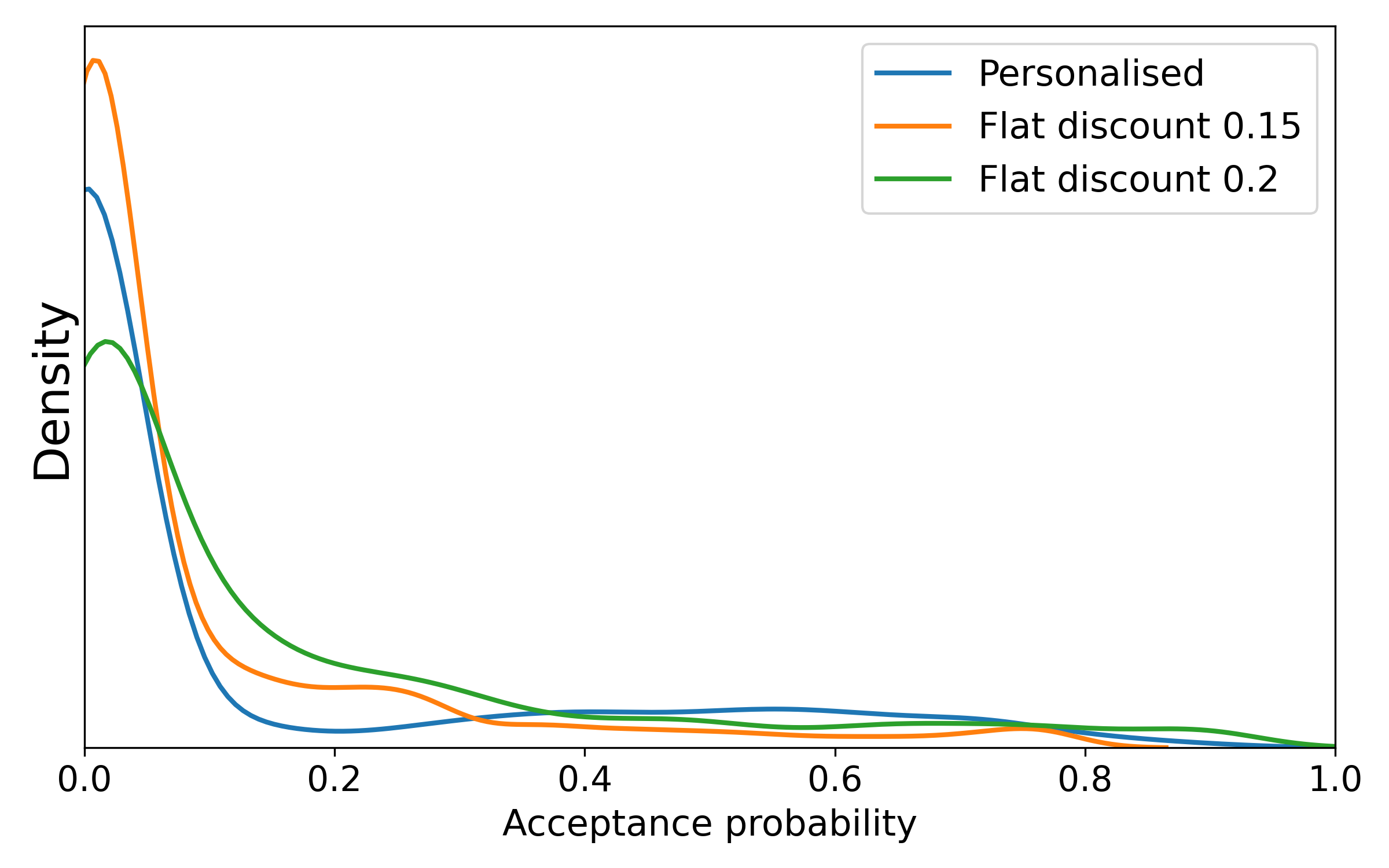}
         \caption{All rides in the shareability graph.}
         \label{fig:prob_all}
    \end{subfigure}
    \hfill
    \begin{subfigure}[b]{0.49\textwidth}
         \centering
         \includegraphics[height=0.18\textheight, width=\textwidth]{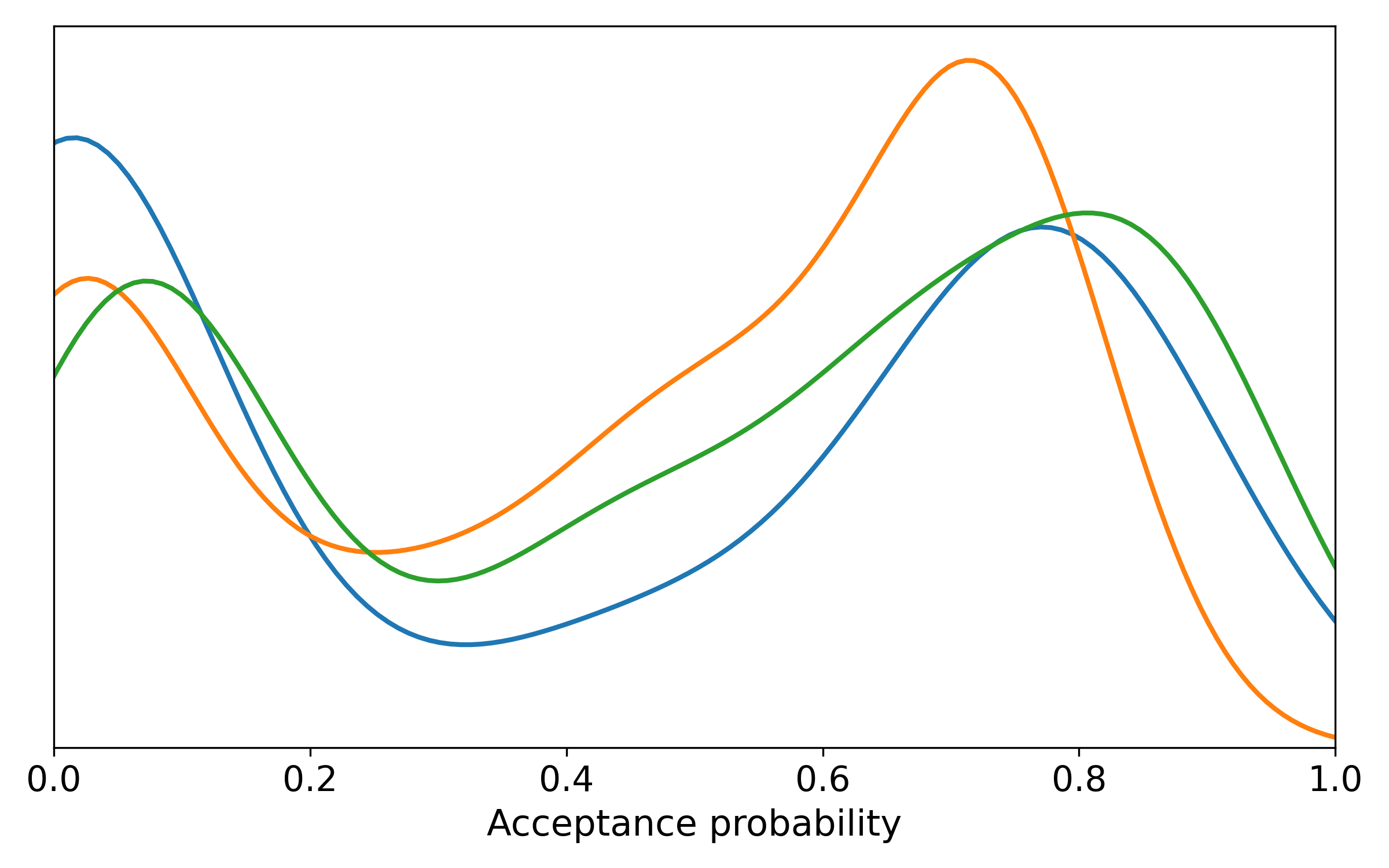}
         \caption{Rides selected in the offer.}
         \label{fig:prob_sel}
     \end{subfigure}
     \caption{Distribution of the acceptance probability for shared rides. 
     On the x-axis the acceptance probability and on the y-axis the distribution density. 
     Most rides in the shareability graph are not attractive and have a very low acceptance probability (Fig. \ref{fig:prob_all}). 
     For the offer, the operator selects two kinds (Fig. \ref{fig:prob_sel}) regardless of the pricing strategy.
     First, rides with a very low acceptance probability, which lead to undiscounted fares on private rides after rejection. 
     Second, shared rides likely to be realised.
     }
     \label{fig:probs_accept}
\end{figure}

We claim that personalised pricing is efficient in recognising well-aligned rides. 
To investigate the property, we investigate threefold relations between mileage reduction, profitability, and acceptance probability. 
Results depicted with heatmaps are presented in Figure \ref{fig:heatmaps}. 
We notice that the personalised pricing scheme efficiently assigns incentivising discounts to travellers who contribute to the system performance. 
Rides that offer over $30\%$ mileage reduction and high expected profitability (over $1.7$) are accepted with over $80\%$ probability.
Conversely, underperforming rides are offered a minimal discount.
We suspect that traveller’s role in the system is related to node’s properties in the shareability graph (examined in \cite{bujak2023network}). 
Flat discount strategies offer the same sharing discount regardless of a ride’s contribution to the system performance.

\begin{figure}[!ht]
    \centering
    \begin{subfigure}{0.32\textwidth}
        \includegraphics[width=\textwidth, height=0.2\textheight]{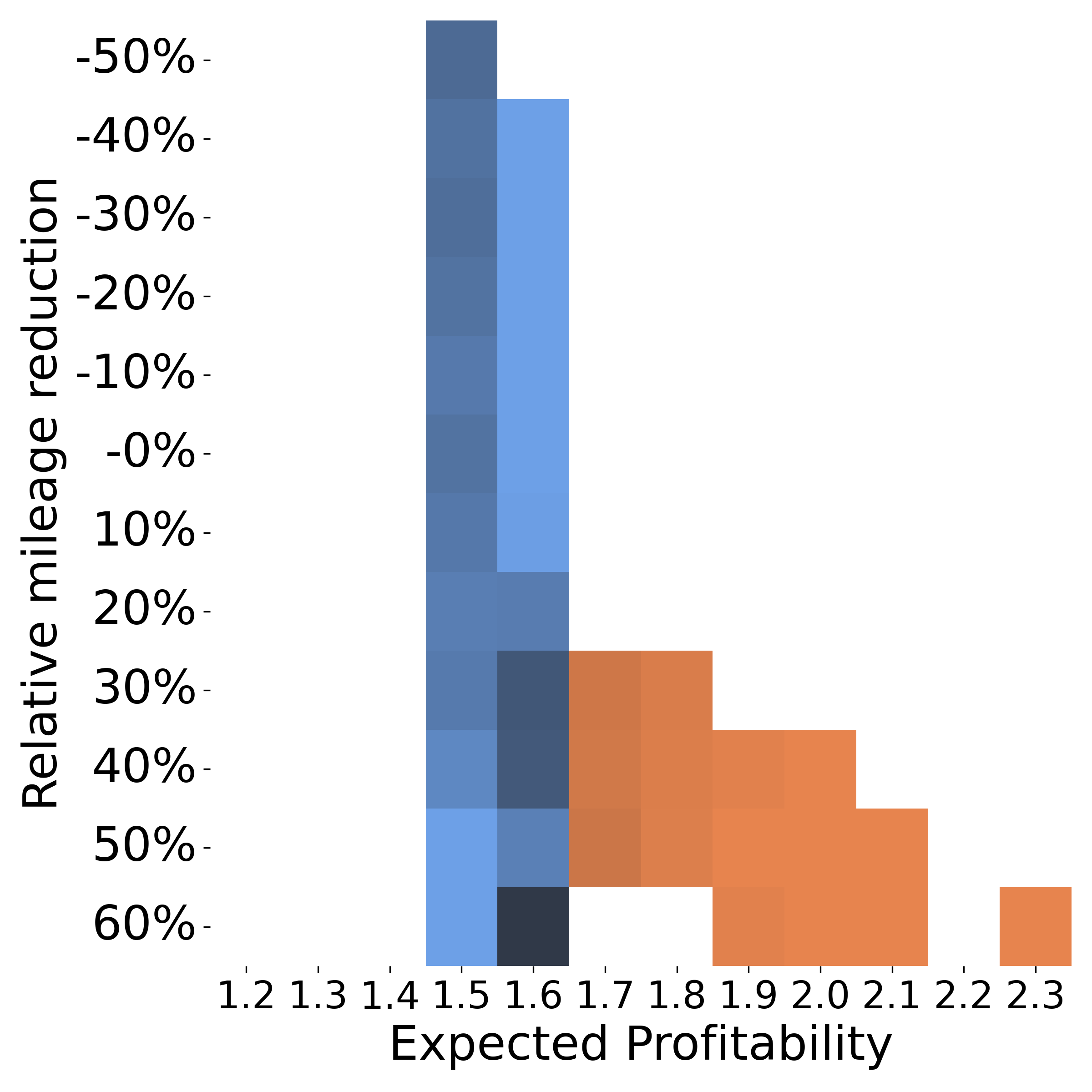}
        \caption{Personalised pricing}
        \label{fig:heatmap_pers}
    \end{subfigure}
    \begin{subfigure}{0.32\textwidth}
        \includegraphics[width=\textwidth, height=0.2\textheight]{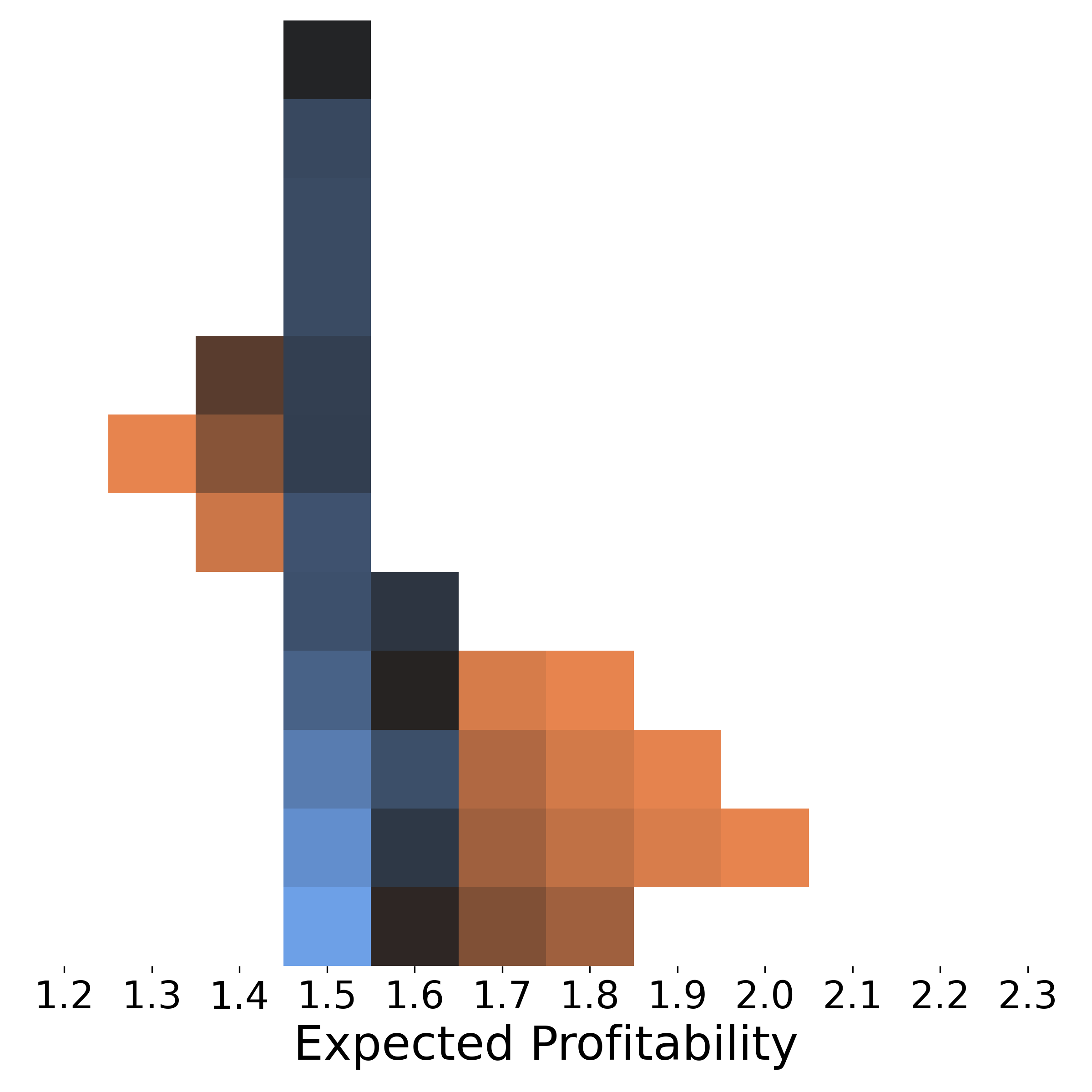}
        \caption{Flat $15\%$ discount}
        \label{fig:heatmap15}
    \end{subfigure}
    \begin{subfigure}{0.32\textwidth}
        \includegraphics[width=\textwidth, height=0.2\textheight]{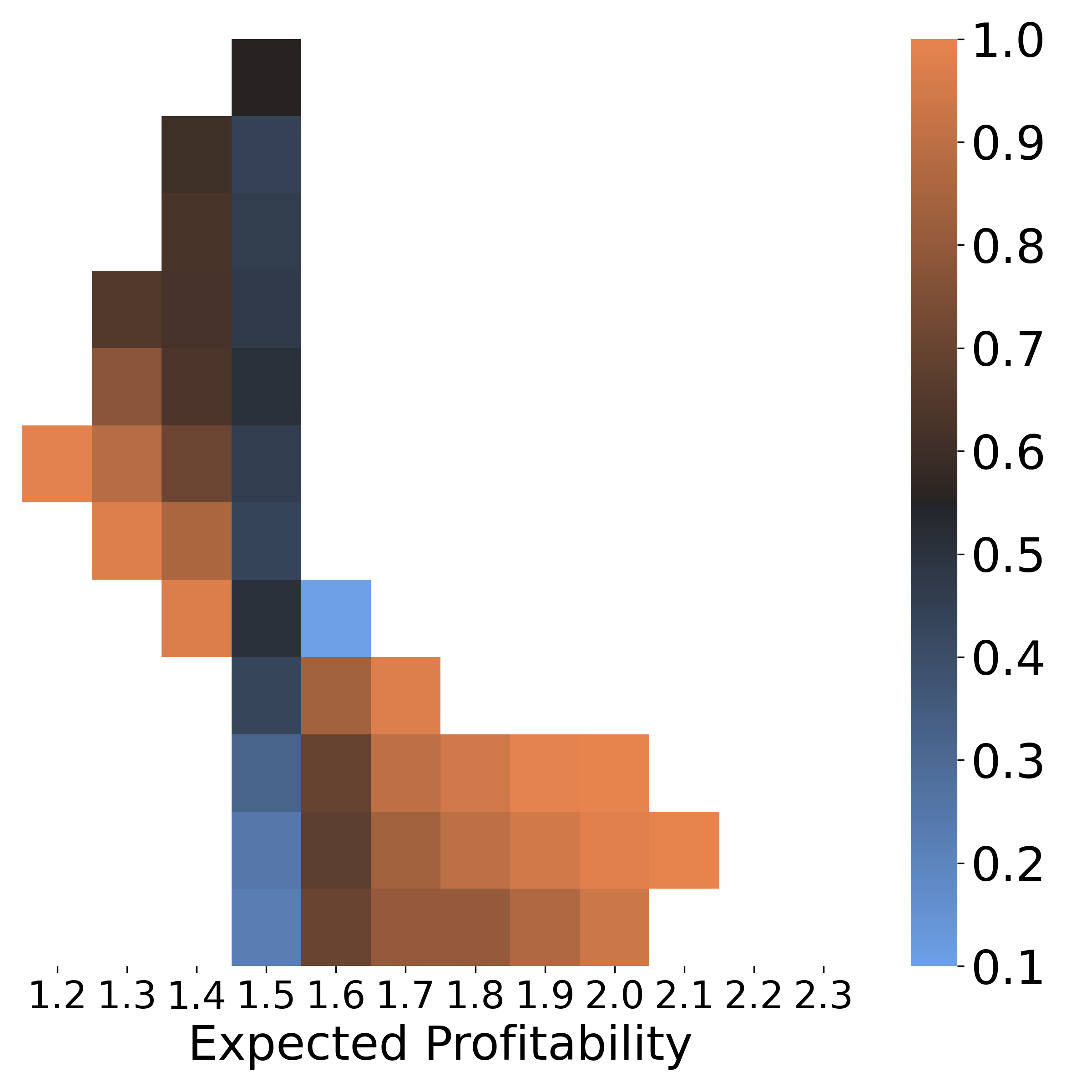}
        \caption{Flat $20\%$ discount}
        \label{fig:heatmap20}
    \end{subfigure}
    \caption{Relation between mileage reduction, expected profitability and acceptance probability under three pricing strategies for shared rides in the shareability graph.
    Personalised pricing assigns low discounts (low acceptance probability) to inefficient rides while incentivising well-aligned travellers with high discounts (high acceptance probability).}
    \label{fig:heatmaps}
\end{figure}

\section{Conclusions}
In this study, we propose a pricing strategy and address the hitherto neglected aspect in ride-pooling studies: population heterogeneity. 
We design a probabilistic framework with a two-stage optimisation process.
We introduce a new personalised pricing algorithm that maximises the ride-pooling service performance in the stochastic formulation.

In Section \ref{sec:methodology}, we propose the probabilistic framework.
We use a population distribution of behavioural traits to construct a shareability graph (Sec. \ref{sec:requests_shareability}). 
By scrutinising ride characteristic we develop an acceptance probability function for each traveller (Sec. \ref{sec:acceptance_probability}). 
To make the optimisation process feasible, we propose a discretization technique (Sec. \ref{sec:discretization}). 
In the ride-pooling system, travellers may not find a proposed shared ride attractive and rather choose private rides. 
In the evaluation process, we need to take both scenarios into account (Sec. \ref{sec:pricing_scheme}). 
To maximise the service performance at a ride level, we introduce the expected profitability measure that combines the per-mileage profit with the acceptance probability (Sec. \ref{sec:objective_measure}).
In Section \ref{sec:offer}, we explain how we choose feasible rides for the offer. 
In Section \ref{sec:loc_glob_optimum}, we prove that our optimisation approach is correct.
The pricing optimisation at a ride level leads to a system optimum. 
Finally, we describe a generalised objective formulation for the pricing (Sec. \ref{sec:generalised_formula}).

To evaluate our pricing mechanism in the proposed probabilistic setting, we conduct numerical simulation based on actual historic data (Sec. \ref{sec:numerical_study}). 
We find that the personalised strategy offers a wide range of discounts (Sec. \ref{sec:distribution_discounts}). 
In Figure \ref{fig:discount_distribution}, we show most feasible rides in the shareability graph do not qualify for preferential fares, i.e. they are poorly performing and receive the minimal discount. 
While analysing offers, we conclude that the personalised pricing favours more complex rides (Sec. \ref{sec:results_offer}, Fig. \ref{fig:degrees}) compared to baselines. 
In Section \ref{sec:economy}, we look at the service performance. 
We find that the personalised pricing strategy improves the expected profitability of each shared ride (Fig. \ref{fig:profitability_unbalanced}). 
In Figure \ref{fig:profitability_distanced_saved}, we show that our model restricts losses from inefficient combinations by maintaining close to private ride profitability while maximising economic gains from well-aligned trips. 
Combined with Figure \ref{fig:discount_distribution}, we understand that our model assigns minimal discounts whenever a shared ride does not contribute to the system efficiency. 
In Table \ref{tab:profit_distance}, we summarise the economic and system performance. The personalised approach is superior to flat discount strategies, improving the average expected profitability by $20\%$ and reducing the expected mileage by $4.5\%$. 
In Section \ref{sec:travellers}, we analyse travellers' perspective.
Figure \ref{fig:heatmaps} further highlight the personalised pricing property: inefficient rides are offered a minimal discount suitable only for pooling-enthusiasts while travellers in efficient rides receive a high incentive to join the service.

To summarise the contribution of our study. 
We develop a novel framework develop optimal pricing for a ride-pooling service. 
Within the framework, we propose a personalised pricing model that maximises economic and system benefits.
To make calculations feasible, we prove that the pricing process can be decomposed: optimisation conducted at a ride level leads to the global optimum.
The numerical results show that our method uncovers complex patterns in the ride-pooling: we identify rides and travellers crucial for the service performance. 
For a future research, we aim to incorporate behavioural feature estimation where the operator learns travellers' preferences with time.

\section*{Acknowledgements}
This research was co-funded by the European Union’s Horizon Europe Innovation Action under grant agreement No. 101103646.

This research is funded by National Science Centre in Poland program OPUS 19 (Grant Number 2020/37/B/HS4/01847).

\bibliography{references}

\end{document}